%% file: paper_arxiv.tex
\documentclass[11pt]{article}

\usepackage[margin=1in]{geometry}
\usepackage[utf8]{inputenc}
\usepackage[T1]{fontenc}
\usepackage{microtype}
\usepackage[hidelinks]{hyperref}
\usepackage{graphicx}
\usepackage{booktabs}
\usepackage{amsmath}
\usepackage{amssymb}
\usepackage{amsthm}
\usepackage{mathtools}
\usepackage{xcolor}
\usepackage{tikz}
\usetikzlibrary{arrows.meta,positioning,shapes.geometric,calc}

\newtheorem{definition}{Definition}
\newtheorem{example}{Example}
\newtheorem{invariant}{Invariant}
\makeatletter
\@ifundefined{theorem}{\newtheorem{theorem}{Theorem}}{}
\@ifundefined{corollary}{\newtheorem{corollary}{Corollary}}{}
\makeatother

\newcommand{\cmark}{$\checkmark$}
\newcommand{\xmark}{$-$}

\providecommand{\sysname}{\textsf{cryptitalk-wallet}}

\title{Safety Invariants for Agents Orchestrating Irreversible State Transitions\\
\large A Four-Dimensional Formalism Evaluated on Public Ledgers}
\author{%
  Zhaoming Yin \\
  \small Datumpont AI LLC, California, USA \\
  \small \texttt{zhaoming@datumpont.ai} \\
  \small \href{https://datumpont.ai}{datumpont.ai}
}
\date{\today}

\begin{document}
\maketitle

\input{content/abstract}

\input{content/01_introduction}
\input{content/02_background}
\input{content/03_formalism}
\input{content/04_orchestration}
\input{content/05_evaluation}
\input{content/06_related_work}
\input{content/07_discussion}
\input{content/08_conclusion}

\bibliographystyle{plain}
\bibliography{references}

\end{document}

%% file: content/abstract.tex
\begin{abstract}
Autonomous agents are increasingly asked to produce
\emph{irreversible} effects on external systems---transferring
funds, writing to durable storage, actuating hardware. Existing
agent frameworks (ReAct, Reflexion, MCP, function calling)
optimize \emph{task success} on benchmarks and give little
attention to the safety of irreversible side-effects. We formalize
one such setting, movement of value across public ledgers, as
state transitions in a four-dimensional space indexed by
$(\textsf{wallet}, \textsf{chain}, \textsf{address},
\textsf{protocol})$, and use that formalism to state and prove a
guarantee we call \emph{execution fidelity}: under a fault model
admitting planner mis-mapping, ambiguous outcomes, retries,
at-least-once delivery, and delegated non-human callers, a
session's realized effect on the ledger is either nothing at all
or exactly the transition that was rendered to the user, exactly
once. The theorem deliberately does not claim that the rendered
transition matches the user's intent---no runtime layer can decide
that---but it confines that unbounded question to a single
predicate over a finite object, which is what makes a preview a
sufficient control rather than a formality. Seven safety
invariants, derived from the fidelity condition rather than
enumerated from experience, discharge the guarantee; we give their
enforcement points and the three places the guarantee stops.
Empirically, on a controlled $N{=}60$ adversarial suite the stack
lifts pass rate by $\sim$74 percentage points over a naive-ReAct
baseline on two write-aggressive backing models, but by only
$\sim$3 points on a write-cautious one---evidence that
single-model evaluations of agent safety stacks are close to
unfalsifiable, and an argument for treating the backing model as a
first-class experimental variable. The system is deployed; 108
production write operations across 8 chains and 8 transition
primitives back the failure taxonomy. Although the evaluation
setting is public ledgers, the formalism and invariants apply to
any probabilistic agent acting on irreversible external state.
\end{abstract}

%% file: content/01_introduction.tex
\section{Introduction}
\label{sec:intro}

Autonomous agents built on probabilistic planners are increasingly
asked to produce \emph{irreversible} effects on external
systems---signing on-chain transactions, writing to durable
storage, actuating hardware. A user who asks ``deposit all my
USDC to Hyperliquid'' or ``swap 200 USDT on Ethereum to BTC''
expects the agent to navigate chain selection, contract
approvals, bridge providers, and RPC consistency on their behalf;
naive wrapping of signing and broadcast tools inside such an
agent reliably produces money-burning failure modes.

The question this paper answers is what can be \emph{guaranteed}
in that setting. The tempting answer---that the agent does what
the user meant---is not available: deciding whether a transaction
faithfully renders a natural-language request is not something a
runtime layer can do, and any paper claiming otherwise is
overselling. The answer we can defend is narrower and, we argue,
more useful. Call a session's realized effect on the ledger
\emph{faithful} if it is either nothing at all or exactly the
transition that was rendered to the user, exactly once. We prove
that a deployed agent system satisfies this property---we call it
\emph{execution fidelity}---under a fault model that admits
planner mis-mapping, ambiguous outcomes, retries, at-least-once
delivery, and delegated non-human callers
(\S\ref{sec:fidelity-theorem}).

What that buys is a reduction. Intent-mapping remains unsolved,
but it becomes the \emph{only} unsolved thing: every other way a
session can go wrong is forced to produce no effect at all. An
unbounded semantic question is thereby confined to a bounded
one---does this four-coordinate tuple and amount match what was
asked---which a human can answer in a second and which a scope
check can approximate when no human is present. The
four-dimensional formalism exists to make that comparison
decidable rather than to describe wallets elegantly.

The problem is not hypothetical. We observe three failure classes
in the wild that the guarantee is built against. RPC-level:
\emph{phantom success}, where a verify-on-broadcast check returns
a hash the RPC retracts seconds later, and the agent reports
\emph{failed} on a transaction that actually landed. LLM-level:
\emph{stale-context drift}, where an agent that concluded ``the
user has no Hyperliquid account'' parrots that conclusion on later
turns even after the cause is fixed, over-weighting a prior
observation rather than re-querying ground truth.
Exchange-API-level: \emph{documented-atomicity drift}, where an
action specified as atomic cancel-and-replace applies only the
cancel half on a replacement-side rejection, leaving the user with
no resting order. One of these cost real money: a retry after a
misreported failure produced a duplicated 200~USDT swap
(\S\ref{sec:evaluation}).

A second finding concerns how such systems should be evaluated at
all. On a controlled $N{=}60$ adversarial suite the invariant
stack lifts pass rate by $\sim$74 percentage points over a
naive-ReAct baseline on two write-aggressive backing models---and
by $\sim$3 points on a write-cautious one, which declines write
tools on ambiguous prompts unprompted. The measured value of a
safety layer is thus largely a function of which model sits
underneath it, so a single-model evaluation can report almost any
number it likes. We treat the backing model as a first-class
experimental variable for this reason.

Concretely, this paper contributes the following. We formalize the
wallet problem as state transitions in a four-dimensional space
$\mathcal{S} = W \times C \times A \times P$, together with a
fault model over the agent--executor surface that is separate from
the usual adversary model over the ledger
(\S\ref{sec:formalism}). We state and prove the fidelity theorem,
and \emph{derive} seven safety invariants from the fidelity
condition by case analysis rather than enumerating them from
experience, giving each one's enforcement point and the three
places the guarantee stops
(\S\ref{sec:orchestration}). We describe \sysname{}, a deployed
multi-tier agent system---executor, avatar, and analyst tiers
glued by an agent SDK over MCP and cross-tier delegation
tokens---and report its on-chain track record of 108 production
write operations across 8 source chains, transactions that landed
on a public ledger rather than in a simulator, from which the
failure taxonomy is drawn. Finally we report a controlled
adversarial suite at $N{=}60$ across four backing models, whose
cross-model spread is itself the methodological finding above
(\S\ref{sec:evaluation}).

The remainder of the paper is structured as follows.
\S\ref{sec:background} surveys agent frameworks and on-chain
safety primitives. \S\ref{sec:formalism} introduces the
four-dimensional state space, the fault model, and the fidelity
condition. \S\ref{sec:orchestration} derives the invariants from
that condition, gives their enforcement points, and proves the
theorem. \S\ref{sec:evaluation} reports the on-chain track record
and the adversarial-suite results. \S\ref{sec:related} situates the work
against prior agent and wallet literature. \S\ref{sec:discussion}
discusses limitations and \S\ref{sec:conclusion} concludes.

%% file: content/02_background.tex
\section{Background}
\label{sec:background}

This section grounds the formalism of \S\ref{sec:formalism} for
readers arriving from any of three communities the paper sits
between: agent / LLM tool-use research, public-ledger systems and
DeFi, and applied security.

\paragraph{Hierarchical deterministic wallets and custody.}
A single mnemonic seed (BIP-39) derives, via BIP-32 / BIP-44, an
unbounded family of key pairs and therefore addresses; authority
over funds at any derived address reduces to control of the seed.
In \sysname{} the seed lives in encrypted-at-rest storage keyed
off a master key held in a cloud key-management service (KMS),
and signing happens in a short-lived
in-process derivation; no plaintext private key is logged,
returned over the API, or persisted. The orchestration-layer
arguments of \S\ref{sec:orchestration} are independent of this
custody choice---the seven invariants apply identically over
hardware-security-module, threshold-signature (MPC), and
smart-contract-wallet substrates, because each is a property of
the agent--executor surface, not of the signing one. Key-recovery
questions (lost mnemonic, social-recovery designs, smart-wallet
guardian sets~\cite{erc4337,safe_aa_modules,argent_session_keys})
are orthogonal and out of scope.

\paragraph{Chains, protocols, tokens.}
Each chain imposes its own address format (EVM hex, Bitcoin bech32
/ taproot, Solana base58, Cosmos bech32). On a given chain, a
user's address holds value under various \emph{protocols}: the
native coin, fungible tokens (ERC-20, SPL / Token-2022, IBC
denominations), LP positions, wrapped or staked derivatives,
bridge-custodied balances, off-chain exchange balances
(Hyperliquid spot and perp), and outcome shares on
CTF-Exchange-style prediction-market venues. We treat protocol as
a dimension orthogonal to address because neither coordinate
determines the other: one address holds balances under many
protocols simultaneously, and one protocol exists at many
addresses across many holders. Collapsing the two---modelling a
position as ``an address that happens to hold token $X$''---loses
the ability to express a transition that changes what is held
while holding where it is held fixed, which is exactly a
\textsc{Swap}. Operation surfaces in
production cluster into a small number of categories: native
transfers, single-chain DEX swaps, cross-chain bridges (THORChain,
Squid / Axelar GMP, Skip on Cosmos), exchange operations,
prediction-market trades, and arbitrary EVM contract calls.
\S\ref{sec:formalism} formalizes these as transition primitives
on a single state space.

\paragraph{Agent orchestration substrate.}
We instantiate the agent with a large language model invoked
through \texttt{claude-agent-sdk}~\cite{mcp2024}, but the
formalism and invariants do not depend on that choice. Tool calls
cross a structured boundary defined by the Model Context Protocol
(MCP)~\cite{mcp2024}: the agent emits a JSON-shaped
\texttt{tool\_use} payload naming a tool and its arguments, the
runtime executes the tool and returns a structured result, and
the agent decides whether to continue. Two caller types appear
throughout: a \emph{human user} driving the agent interactively,
and an \emph{avatar}---a persistent agent instance, installed and
parameterized by the user, that runs on a schedule or trigger and
acts on the user's behalf while the user is not present. Avatars
hold no keys and reach no external system directly; they call the
same API a human calls, under a scoped credential
(\S\ref{sec:multi-tier}). Prior work on LLM agents
relevant here includes ReAct~\cite{yao2023react} and
Reflexion~\cite{shinn2023reflexion} (single-process planners),
AutoGen~\cite{wu2023autogen} and MetaGPT~\cite{hong2023metagpt}
(multi-agent orchestration), and DecoAgent~\cite{decoagent}
(decomposition under tool constraints).

\paragraph{Threat model.}
Public-ledger transactions are irreversible: once included in a
finalized block, no rollback path exists short of chain reorg,
which we treat as out of scope. Authority on a user's funds
reduces to a signature from a key derivable from the user's seed.
We treat the user-side software stack---LLM agent, MCP client,
wallet service---as \emph{fallible} (may misinterpret intent,
retry on ambiguous results, mis-marshal a boolean) but not
directly \emph{adversarial}, and assume an honest-majority network
and honest block explorers for post-broadcast verification. The
threat-surface taxonomy we work against:

\begin{description}\setlength\itemsep{2pt}
  \item[In scope] agent-side fallibility (the seven invariants of
    \S\ref{sec:orchestration}); cross-tier authority leakage
    addressed by scoped delegation tokens; HTTP-boundary
    duplicate-delivery races addressed by the idempotent
    receiver.
  \item[Partly in scope] indirect prompt
    injection~\cite{mcp_sok} through avatar-ingested feeds (RSS,
    social, third-party webhooks)---scope-bounded by
    Invariant~\ref{inv:delegation} but not eliminated;
    compromised-avatar single-point-of-compromise; MEV
    (sandwich, JIT, oracle/bridge front-running)
    \cite{daian2020flash,qin2022quantifying_mev,zhou2023sok}
    addressed in part by limit-order discipline on Trade
    primitives, not by any of the seven invariants.
  \item[Out of scope] signature-key compromise; smart-contract
    bugs in routed protocols; Sybil attacks on consensus;
    cross-chain bridge protocol vulnerabilities~\cite{sok_bridges}.
\end{description}

The seven invariants in \S\ref{sec:orchestration} address
agent-side and receiver-side failures; the partly-in-scope and
out-of-scope surfaces are revisited in \S\ref{sec:discussion}.
Recent work on LLM context fragility~\cite{liu2023lost,
laban2025llms, lost_beginning_reasoning} and on agent-layer attack
surfaces in MCP systems~\cite{mcp_sok} complements our
fallible-software framing rather than substituting for it.

%% file: content/03_formalism.tex
\section{A Four-Dimensional State Space}
\label{sec:formalism}

\input{content/figures/4d_space}

\begin{definition}[Wallet state]
Let $W$ be the set of mnemonics the user custodies, $C$ the set
of chains, $A_c$ the set of addresses derivable on chain $c \in
C$, and $P_c$ the set of protocols (tokens, contracts, off-chain
accounts) recognized on chain $c$. A \emph{state} $s \in
\mathcal{S}$ is a function
\[
s : W \times C \times A \times P \to \mathbb{Z}_{\ge 0},
\]
where $s(w, c, a, p)$ denotes the balance under protocol $p$ at
address $a$ on chain $c$, owned by mnemonic $w$. Balances are
non-negative integers in the protocol's smallest unit (wei,
satoshi, lamport, or the protocol's specified decimals); the
human-decimal projection (e.g.\ \texttt{54.33 USDC}) is a UI
abstraction over the underlying integer.
\end{definition}

A wallet's observable balance is a finite-support slice of $s$:
the tool returns only $s(w, c, a, p) > 0$ entries.

\begin{definition}[Transition]\label{def:transition}
A \emph{transition} is a partial function $\tau : \mathcal{S} \to
\mathcal{S}$ that modifies the balances at a bounded set of $(w,
c, a, p)$ quadruples. It is partial because a transition has
preconditions: we write $\mathrm{dom}(\tau)$ for the set of
states on which $\tau$ is defined---those whose pre-state carries
sufficient balance at every coordinate $\tau$ decrements, together
with whatever chain-specific preconditions the primitive requires
(a live ERC-20 approval, an unspent UTXO set covering the output,
an exchange account in a tradeable status). The distinction is
load-bearing rather than notational: $s \notin
\mathrm{dom}(\tau)$ is precisely the condition
Invariant~\ref{inv:preflight} (\S\ref{sec:orchestration}) checks
before signing, and the
reason it must be checked against ground-truth chain state rather
than against a cached read is that $\mathrm{dom}(\tau)$ is
evaluated at broadcast time, not at planning time. Writing $s' =
\tau(s)$ for $s \in \mathrm{dom}(\tau)$, every transition
satisfies two formal constraints.

\medskip\noindent\textbf{Conservation.} A \emph{conservation class} is a set
$K \subseteq W \times C \times A \times P$ over which the
protocol's specification guarantees aggregate invariance (e.g.\
\emph{global USDC supply}, fixed by the issuer's mint contract;
or \emph{a single mnemonic's Hyperliquid spot sub-account}, fixed
by the exchange's reconciliation rule). Each class has a fee
allowance $\varepsilon_K(\tau) \ge 0$ and a mint/burn delta
$\delta_K(\tau) \in \mathbb{Z}$ such that
\makeatletter
\if@twocolumn
\begin{multline*}
\sum_{(w,c,a,p) \in K} s'(w,c,a,p) \\
= \sum_{(w,c,a,p) \in K} s(w,c,a,p)
\;-\; \varepsilon_K(\tau) + \delta_K(\tau).
\end{multline*}
\else
\[
\sum_{(w,c,a,p) \in K} s'(w,c,a,p)
= \sum_{(w,c,a,p) \in K} s(w,c,a,p) - \varepsilon_K(\tau) + \delta_K(\tau).
\]
\fi
\makeatother
Per-chain bridging via lock-and-mint, for example, has $\delta_K
\neq 0$ on each individual chain's USDC class but $\delta_K = 0$
on the global USDC class.

\medskip\noindent\textbf{Signature authorization.} For every coordinate where
the balance strictly decreases, the transition carries a valid
signature from a key derivable from $w$:
\makeatletter
\if@twocolumn
\begin{multline*}
\forall (w,c,a,p):\; s'(w,c,a,p) < s(w,c,a,p) \\
\Longrightarrow\; \exists \sigma:\; \mathrm{Verify}_{\mathrm{pk}(w,c,a)}(\tau, \sigma) = 1,
\end{multline*}
\else
\[
\forall (w,c,a,p):\; s'(w,c,a,p) < s(w,c,a,p) \;\Longrightarrow\; \exists \sigma:\; \mathrm{Verify}_{\mathrm{pk}(w,c,a)}(\tau, \sigma) = 1,
\]
\fi
\makeatother
where $\mathrm{pk}(w,c,a)$ is the derivation of the public key
for chain $c$ and address $a$ from mnemonic $w$
(\S\ref{sec:background}), and $\mathrm{Verify}$ is the chain's
signature-verification primitive applied to a canonical encoding
of $\tau$.

Both conditions are \emph{well-formedness} properties of the
substrate, not guarantees the orchestration layer supplies. Any
transition a ledger accepts already satisfies them, and they
constrain the model rather than the agent: they say that a
transition cannot invent or destroy value outside a declared
mint/burn, and cannot move value out of a coordinate without the
corresponding key. What they conspicuously do \emph{not} say is
that the transition carrying that signature is the one the user
asked for---a hundred unintended transfers signed by the correct
key satisfy both conditions. Closing that distance is the job of
\S\ref{sec:orchestration}, and the reason the formalism is worth
stating is that it makes the target precise: the orchestration
layer must ensure the tuple that gets signed is the tuple that
was promised.
\end{definition}

We identify eight transition primitives, matching the categories
observed in production (\S\ref{sec:evaluation},
Table~\ref{tab:primitives}). The point of tabulating them is to
establish that the state space needs all four axes and no fewer:
each of $C$, $A$ and $P$ is moved by some primitive while the
others are held fixed, so no axis is a function of the rest and
none can be dropped without losing the ability to express an
operation the system actually performs. For each primitive we
list which dimensions change (\cmark) and which are held fixed
(\xmark):

\begin{center}
\small
\begin{tabular}{lcccc}
\toprule
Primitive & Wallet & Chain & Address & Protocol \\
\midrule
\textsc{Transfer}        & -- & --     & \cmark & --     \\
\textsc{Swap}            & -- & --     & --     & \cmark \\
\textsc{Bridge}          & -- & \cmark & \cmark & $\pm$  \\
\textsc{Deposit}         & -- & \cmark & \cmark & \cmark \\
\textsc{Withdraw}        & -- & \cmark & \cmark & \cmark \\
\textsc{Trade}           & -- & --     & --     & \cmark \\
\textsc{Intra-Exchange}  & -- & --     & --     & \cmark \\
\textsc{Contract}        & -- & --     & --     & $\pm$  \\
\bottomrule
\end{tabular}
\end{center}

\textsc{Intra-Exchange} (e.g.\ moving balance between a
Hyperliquid spot and perp sub-account) holds chain and address
fixed while crossing the protocol axis. \textsc{Contract} (an
arbitrary EVM call) holds chain and address fixed and may or may
not cross the protocol axis depending on the call's effect
(\texttt{WETH.withdraw} converts WETH$\to$ETH, crossing protocol;
a read-only call holds all four fixed and is not a transition).

The wallet column deserves comment, since no primitive moves it
and a reader may reasonably ask why $W$ is in the model at all.
It is there precisely \emph{because} it is fixed: the signature
condition of Definition~\ref{def:transition} pins $w$ across a
transition, so a transition whose source coordinate names the
wrong mnemonic is not a different-but-valid operation---it is one
the user's key cannot authorize. Users routinely custody several
mnemonics, and selecting the wrong one is failure class~4
(wrong-wallet selection, \S\ref{sec:failure-classes}), our
second-most-common observed class. A three-axis model would have
no coordinate in which to express that error: the operation would
look well-formed and simply fail at the balance check for
reasons the model could not name. Carrying $W$ explicitly is what
turns a confusing insufficient-funds error into a
wrong-wallet diagnosis, which is the difference between a user
who retries blindly and one who switches wallets.

\begin{definition}[Composition and plans]\label{def:composition}
The \emph{composition} $\tau_2 \circ \tau_1$ of two transitions
is the partial function $s \mapsto \tau_2(\tau_1(s))$, defined
exactly when $s \in \mathrm{dom}(\tau_1)$ \emph{and} $\tau_1(s)
\in \mathrm{dom}(\tau_2)$. A finite chain $\pi = \tau_n \circ
\cdots \circ \tau_1$ is a \emph{plan}; the conservation and
signature-authorization conditions of
Definition~\ref{def:transition} compose coordinate-wise.

\emph{Cross-chain bridges} are accommodated within this
single-$\mathcal{S}$ framing: \textsc{Bridge} affects per-chain
conservation classes on \emph{both} the source chain (lock or
burn, $\delta_K < 0$) and the destination chain (mint or release,
$\delta_K > 0$), while the global class for the bridged asset has
$\delta_K = 0$. Plan composition across chains is therefore just
ordinary composition in $\mathcal{S}$ with different
conservation classes contributing on each step.

Plans introduce a separate, runtime concern not visible at the
level of Definition~\ref{def:transition}: an executor that
completes $\tau_i$ but fails before $\tau_{i+1}$ leaves the chain
in an intermediate state that is reachable but not the one the
plan intends. Avatar-tier discipline against re-issuing $\tau_i$
on a transient failure of $\tau_{i+1}$ is formalized in
Invariant~\ref{inv:plan-retry}.
\end{definition}

\begin{example}[``Deposit all my USDC to Hyperliquid'']
Mapping the user intent to a transition tuple:
\begin{itemize}
  \item Source: $s(w, \text{arbitrum}, a_w^{\text{EVM}},
        \text{USDC}) = 54{,}331{,}159$ (54.33 USDC in 6-decimal
        units).
  \item Transition: \textsc{Deposit} with parameters
        $(\text{arbitrum}, a_w^{\text{EVM}}, \text{USDC})
        \mapsto (\text{hyperliquid-L1}, a_w^{\text{HL}},
        \text{perp-USDC})$.
  \item Implementation: the existing $\textsf{transfer}$ tool
        issues an ERC-20 call $\text{USDC}.\text{transfer}(b,
        54{,}331{,}159)$ to the bridge contract address $b$; the
        bridge observes the transfer and the L1 validator set
        credits \texttt{perp-USDC} to $a_w^{\text{HL}}$.
\end{itemize}
\end{example}

\begin{example}[``Swap 200 USDT Ethereum to BTC'' via THORChain]
Source $s(w, \text{ethereum}, a_w^{\text{EVM}}, \text{USDT}) =
2{,}087 \cdot 10^6$. Transition: \textsc{Bridge} with chain and
protocol both changing, executed via a single call to the
THORChain router's $\texttt{depositWithExpiry}$ on Ethereum; the
bridge observes, atomically swaps USDT$\to$RUNE$\to$BTC across
its liquidity pools, and emits a BTC send to the specified
destination address.
\end{example}

Both examples were chosen for clarity, and clarity is exactly what
makes them unrepresentative: each names its source, asset, amount
and destination, so the mapping to a tuple is mechanical. Real
requests are frequently not like this, and a formalism that only
handles the easy case would be worth little. The interesting
question is what the model says about a request that does
\emph{not} determine a transition.

\begin{example}[``Move my stables somewhere safer'']
\label{ex:ambiguous}
This request fixes almost nothing. \emph{Which} stablecoins---the
user may hold USDC, USDT and DAI across several chains, so the
protocol coordinate $p$ is a set, not a value. \emph{Safer} in
what sense: fewer bridge hops, a chain with faster finality, a
non-custodial venue, or simply out of a protocol the user has read
something about? The destination coordinates $(c, a)$ are
underdetermined, and different readings select genuinely different
transitions with different costs and different risks.

The formalism's response is not to guess well. It is that no
single $\tau$ has been determined, so no preview can be rendered,
so the confirm gate of \S\ref{sec:orchestration} cannot be
satisfied and no key is derived. The correct behavior is a
clarification request. This is worth stating explicitly because it
is a case where the right answer is to \emph{not act}, and where
an agent optimized for task-completion benchmarks is under
precisely the wrong incentive: picking a plausible reading scores
as success, and refusing scores as failure, while on a public
ledger the ranking is reversed.
\end{example}

This is also the principled basis for the scoring rule in
\S\ref{sec:methodology}, which counts a conservative refusal or a
clarification request as a pass whenever the expected outcome is
non-\texttt{succeed}. Without Example~\ref{ex:ambiguous} that rule
reads as scorer leniency; with it, it is the only scoring that
matches what the system is trying to do.

\subsection{Sessions, faults, and what can be guaranteed}
\label{sec:fidelity}

The machinery above describes what \emph{can} happen on a ledger. It
does not yet say what a probabilistic agent driving that ledger
should be held to. We state that target here, before describing any
mechanism, because the guarantee we can honestly offer is narrower
than a reader might expect and the narrowing is the substance of the
claim rather than a caveat on it.

\begin{definition}[Session]\label{def:session}
A \emph{session} $\sigma$ is one pass through the pipeline: a user
request $u$ expressed in natural language, a single transition
$\tau_p$ that the agent commits to and renders before acting---the
\emph{preview}---and whatever execution follows. We write
$\mathrm{real}(\sigma)$ for the multiset of transitions that take
effect on the ledger as a consequence of $\sigma$. The multiset,
rather than set, is deliberate: a transition applied twice is the
failure mode of \S\ref{sec:evaluation}'s realized double-spend, and
a formalism that silently collapses it cannot express the property
we need.
\end{definition}

\medskip\noindent\textbf{Fault model.} \S\ref{sec:background}'s
threat model bounds the \emph{adversary}. What follows bounds the
\emph{fallibility} of non-adversarial components, which is a
different axis and the one this paper is about. Within a session we
admit:

\begin{description}\setlength\itemsep{2pt}
  \item[F1 (intent mis-mapping)] The planner may commit to any
    $\tau_p$ whatsoever, with arbitrary relation to $u$. This fault
    is \emph{unbounded}: nothing in a runtime layer can decide
    whether a tuple faithfully renders a natural-language request.
  \item[F2 (pre-state drift)] Between planning and broadcast the
    chain may move, so that $s \notin \mathrm{dom}(\tau_p)$ at the
    moment of execution even though it held at planning time.
  \item[F3 (outcome ambiguity)] After broadcast, whether $\tau_p$
    took effect may be unobservable within any bounded window, or
    observably wrong---an RPC may return a hash it later retracts.
  \item[F4 (re-issue)] The planner may invoke any tool any number of
    times, in particular after an error or an ambiguous result.
  \item[F5 (duplicate delivery)] A channel carrying externally
    originated intent is at-least-once; the same request may arrive
    more than once.
  \item[F6 (non-human caller)] A call may originate from an
    autonomous caller acting under delegated authority rather than
    from the user directly.
\end{description}

We assume throughout that executor code is correct and its checks
are not bypassable, that signing keys are uncompromised, that the
chain does not reorganize, and that the RPC is honest---the last
already flagged as a trust assumption in
Invariant~\ref{inv:preflight}. These are assumptions, not results.

\begin{definition}[Execution fidelity]\label{def:fidelity}
A session $\sigma$ satisfies \emph{execution fidelity} if
\[
\mathrm{real}(\sigma) \in \bigl\{\, \emptyset,\ \{\tau_p\} \,\bigr\}.
\]
That is: either nothing happened, or exactly the transition that was
rendered happened, exactly once.
\end{definition}

It is worth being explicit about what this does \emph{not} say.
Fidelity makes no claim that $\tau_p$ is a faithful rendering of
$u$; F1 is unbounded by construction and survives untouched. What
fidelity buys is that F1 becomes the \emph{only} way a session can
go wrong. Every other fault in the model is forced to yield
$\emptyset$ rather than an unintended effect, so the residual risk
of the whole pipeline collapses onto a single predicate---``does
$\tau_p$ render $u$''---evaluated over a finite object: four
coordinates and an amount.

This is the sense in which the four-dimensional space earns its
place in the paper. Its role is not descriptive elegance; it is that
it makes ``what was promised'' and ``what was executed'' the same
kind of object, so that comparing them is a decidable finite check
rather than a judgment about language.
\S\ref{sec:fidelity-theorem} proves that
Invariants~\ref{inv:auth}--\ref{inv:idempotent-receiver} suffice,
and \S\ref{sec:fidelity-limits} is candid about the three places the
guarantee stops.

%% file: content/figures/4d_space.tex
\begin{figure*}[t]
\centering
\small
\begin{tabular}{llllll}
\toprule
 & Wallet $w$ & Chain $c$ & Address $a$ & Protocol $p$ & Balance \\
\midrule
Pre-state $s$   & $w_0$ & arbitrum       & $a_w^{\text{EVM}}$ & USDC      & $54{,}331{,}159$ \\
Post-state $s'$ & $w_0$ & hyperliquid-L1 & $a_w^{\text{HL}}$  & perp-USDC & $54{,}331{,}159$ \\
\midrule
\textsc{Deposit} & fixed & \textbf{moved} & \textbf{moved} & \textbf{moved} & conserved \\
\bottomrule
\end{tabular}
\caption{A transition as a coordinate readout: ``deposit all my USDC
to Hyperliquid'' resolved against $\mathcal{S} = W \times C \times A
\times P$. Three of four coordinates move; $W$ is fixed, because the
same mnemonic authorizes both sides. $\mathcal{S}$ is an index set
rather than a geometry, so we render a transition as the pair of
coordinate tuples it relates rather than as a projection. This is
also, deliberately, the shape of the preview the agent must render
before a key is derived (Invariant~\ref{inv:auth},
\S\ref{sec:orchestration}): the reason to
carry all four coordinates explicitly is that it makes ``what was
promised'' a finite object that a user, or a scope check, can compare
against ``what was executed''. Balances are integers in the
protocol's smallest unit; $54{,}331{,}159$ is 54.33 USDC at six
decimals.}
\label{fig:4dspace}
\end{figure*}

%% file: content/04_orchestration.tex
\section{Agent Orchestration}
\label{sec:orchestration}

\input{content/figures/architecture}

\subsection{Mapping natural language to transitions}

The agent receives a user request in natural language and must
commit to a single transition tuple before touching any signing
key. This step is where the system's guarantee is weakest and
where we make the smallest claim: no runtime layer can decide
whether a tuple faithfully renders an intent expressed in
natural language. What the layer can do is force the mapping to
be \emph{explicit, singular, and inspectable} before anything
irreversible happens, so that the residual risk is concentrated
in one bounded comparison rather than spread across the whole
execution. Two mechanisms serve that end.

\emph{Skill-loading} injects only those routing rules whose
triggers match the current message (declared in per-protocol
markdown files with regex frontmatter), so the system prompt
grows $O(1)$ rather than $O(N)$ in the number of supported
protocols. \emph{Preview--confirm} is the mechanism behind
Invariant~\ref{inv:auth} below: the agent must render the tuple
it has chosen and obtain confirmation against that rendering
before the key is derived. A request that does not resolve to a
single tuple therefore cannot be executed at all: the correct
behavior is a clarification request, not a best guess
(Example~\ref{ex:ambiguous}).

\subsection{Deriving the invariants}
\label{sec:deriving}

A list of seven safety rules invites the question of where the
seven came from---whether they follow from anything, or are simply
the bugs we happened to hit written up as principles. We therefore
derive them, and defer the production failure classes to
\S\ref{sec:failure-classes} where they serve as a check on the
derivation rather than as its source.

The derivation is a case analysis on
Definition~\ref{def:fidelity}. A session's realized multiset
$\mathrm{real}(\sigma)$ must equal $\emptyset$ or $\{\tau_p\}$, so
a violation is an element of $\mathrm{real}(\sigma)$ that is
either (a)~not $\tau_p$, (b)~a second copy of $\tau_p$, or
(c)~something derived from $\tau_p$ but not equal to it. Each
branch generates its obligations by asking what a runtime layer
must establish, and \emph{when}, to exclude it.

\medskip\noindent\textbf{(a) Nothing unrendered may execute.}
Executing requires a key. So the key must not be derivable
without a tuple that has been rendered and confirmed---this is
Invariant~\ref{inv:auth}. Distinctly, the fact that \emph{a}
confirmation exists does not establish that \emph{this caller}
was entitled to it, which under F6 is a separate question and
yields Invariant~\ref{inv:delegation}. These two are often
conflated as ``authorization''; they answer different questions
(\emph{was this tuple approved} versus \emph{may you approve it}),
and a system with only one of them is exploitable in a way we
observed in practice.

\medskip\noindent\textbf{(c) Nothing derived-but-different may
execute.} A tuple approved at planning time may be inapplicable at
broadcast time under F2. The layer must therefore re-establish
$s \in \mathrm{dom}(\tau_p)$ immediately before signing, against
ground truth rather than a cached view---Invariant~\ref{inv:preflight}.
The alternative failure here is subtle and worth naming: a system
that silently adapts the transition to the available balance
rather than refusing has produced a transition the user never saw,
which is case~(a) wearing case~(c)'s clothes.

\medskip\noindent\textbf{(b) Nothing may execute twice.}
Multiplicity has three independent sources, and closing one does
nothing for the others. \emph{The executor's own belief:} if the
system can report failure on a transition that in fact landed, it
manufactures the trigger for a retry, so outcomes that cannot be
observed must be reported as uncertain
(Invariant~\ref{inv:postobs}) and outcomes that cannot be
confirmed must not be reported as success
(Invariant~\ref{inv:phantom}). \emph{The planner (F4):} even given
an honest uncertainty signal, nothing stops a planner from calling
again, so re-issue must be gated on positively confirming absence
(Invariant~\ref{inv:plan-retry}). \emph{The channel (F5):}
duplication upstream of the planner is invisible to both of the
above, so the boundary that ingests intent must deduplicate
(Invariant~\ref{inv:idempotent-receiver}).

\medskip
Table~\ref{tab:invariants} records the result. We claim
completeness only with respect to this decomposition, which is a
bounded claim and the only one we can support: given
Definition~\ref{def:fidelity} and faults F1--F6, every way to
violate fidelity falls in branch (a), (b) or~(c), and each branch's
obligations are discharged by the invariants listed against it.
This is not a claim that no other failures exist---F1 sits outside
the analysis entirely, and \S\ref{sec:fidelity-limits} names three
further limits.

\begin{table}[h]
\centering\small
\caption{The seven invariants, by the fidelity-violation branch
that generates each. \emph{Tier} indicates where each is enforced:
\textbf{E} executor-local, \textbf{X} cross-tier, \textbf{P}
plan-level, \textbf{R} HTTP receiver.}
\label{tab:invariants}
\begin{tabular}{clcl}
\toprule
\# & Name & Tier & Excludes \\
\midrule
\ref{inv:auth}                 & Confirm gate                & E & (a) unrendered \\
\ref{inv:preflight}            & Pre-state sufficiency       & E & (c) inapplicable at broadcast \\
\ref{inv:postobs}              & Post-state observability    & E & (b) false failure report \\
\ref{inv:phantom}              & No phantom success          & E & (b) false success report \\
\ref{inv:delegation}           & Scoped delegation           & X & (a) unentitled caller \\
\ref{inv:plan-retry}           & Plan-level retry safety     & P & (b) planner re-issue \\
\ref{inv:idempotent-receiver}  & Idempotent intent receiver  & R & (b) channel redelivery \\
\bottomrule
\end{tabular}
\end{table}

Two structural observations fall out of the derivation that were
not visible when the invariants were merely listed.
Invariants~\ref{inv:auth}--\ref{inv:phantom} are
\emph{executor-local}: they hold within the wallet agent
regardless of who issued the request, and are properties of a
single signing-and-broadcast cycle. The remaining three are not,
and each becomes load-bearing only when a specific structural
feature is present---Invariant~\ref{inv:delegation} once the
executor has more than one caller type,
Invariant~\ref{inv:plan-retry} once a planner can retry without a
human in the loop, Invariant~\ref{inv:idempotent-receiver} once
intent arrives over a network boundary. A single-user,
single-process, human-driven wallet needs only the first four,
which is why systems that stop there appear correct until they
grow one of those three features.

Invariants~\ref{inv:plan-retry} and~\ref{inv:idempotent-receiver}
in particular attack duplication from opposite sides of the same
boundary---sender-side verification versus receiver-side
deduplication---and neither subsumes the other: retry safety
cannot see a duplicate that arrives before the planner runs, and
an idempotent receiver cannot see a second call the planner makes
on its own initiative.

\begin{invariant}[Confirm gate]\label{inv:auth}
No signing key is derived, and therefore no transition is
authorized, until an explicit confirmation has been recorded
against a \emph{rendered} transition tuple. The write tool runs
in two phases: a \texttt{confirm=false} call resolves the user's
request to a single tuple $\tau$ and returns it as a preview
(source and destination coordinates, amount in both integer and
human-decimal form, fees, expected output); only a subsequent
\texttt{confirm=true} call carrying that same tuple releases the
key. In attended operation the confirmation is the human's; in
unattended operation it is admission by
Invariant~\ref{inv:delegation}'s scope check, which is the
weaker of the two and bounds the resulting authority rather than
checking the tuple's content
(\S\ref{sec:discussion}).
\end{invariant}

This gate is what makes the rest of the stack statable: it
forces the system to commit to exactly one point of
$\mathcal{S}$ before any irreversible step, so that ``what was
promised'' is a finite object the remaining invariants can be
stated against. It is a property of the \emph{orchestration}
layer, distinct from the substrate's own signature requirement
(\S\ref{sec:formalism}, Definition~\ref{def:transition}): the
ledger guarantees that some key signed whatever was broadcast,
but not that what was broadcast is what the user was shown.

A gate is only as strong as its cheapest bypass. In our
deployment that bypass was a type coercion rather than a logic
error: MCP serializes \texttt{bool} arguments as the literal
string \texttt{"false"} on some client paths, and Python's
non-empty-string truthiness silently promotes it to
\texttt{True}, admitting an unconfirmed call as a confirmed one
(failure class~3, \S\ref{sec:failure-classes}). Every write tool
now parses string forms explicitly. We report this because it
illustrates a general point about where these invariants
actually fail---at the type boundary between the agent's
serialization and the executor's interpretation, not in the
policy the invariant states.

\begin{invariant}[Pre-state sufficiency]\label{inv:preflight}
Every transition is gated by a pre-broadcast check of the
current on-chain balance. A transition that would set any
coordinate below zero is rejected \emph{before} signing, with a
message naming the token, current balance, and shortfall.
Crucially, the preflight check must bypass eventually-consistent
indexer caches and hit the L1 RPC \texttt{balanceOf} directly; a
cached read would permit indexer-drift failures. The L1 RPC
remains a trust assumption: the current deployment uses a pinned
set of providers per chain with cross-provider sanity checks on
the read. Light-client verification (header sync + inclusion
proof) would close the surface and is flagged as future work
(\S\ref{sec:discussion}).
\end{invariant}

\begin{invariant}[Post-state observability]\label{inv:postobs}
After broadcast, the tool polls the chain's transaction-lookup
RPC for up to 12 seconds before reporting success. If the poll
times out, the tool reports \emph{uncertainty}---not failure---
and includes the block-explorer URL so the agent can convey
that nuance rather than parroting ``failed''.
\end{invariant}

\begin{invariant}[No phantom success]\label{inv:phantom}
A broadcast that returns a hash but never propagates to an
indexer is treated as uncertain; the tool refuses to claim
``Sent!'' on a hash whose existence on chain cannot be verified.
\end{invariant}

\begin{definition}[Authorization predicate]\label{def:auth}
Fix a finite scope alphabet $\mathfrak{S}$. For an actor identity
$\alpha = (\textsf{caller\_type}, \textsf{caller\_id})$, a scope
set $\Sigma \subseteq \mathfrak{S}$, and a transition $\tau$
requiring scope set $\Sigma_\tau \subseteq \mathfrak{S}$, the
\emph{authorization predicate} is $\mathrm{auth}(\alpha, \Sigma,
\tau) = 1$ iff $\Sigma_\tau \subseteq \Sigma$. The predicate is
independent of $\alpha$'s type: a human caller and an avatar
caller with the same $\Sigma$ are indistinguishable to the
executor; the difference is observed only in the audit log.
\end{definition}

\begin{invariant}[Scoped delegation]\label{inv:delegation}
Every write-capable tool call carries an actor identity
$\alpha$ and a scope set $\Sigma$ derived from the auth context.
The executor admits the call iff $\mathrm{auth}(\alpha, \Sigma,
\tau) = 1$, and the audit log records both $\alpha$ and
$\Sigma_\tau$ alongside the transition tuple. For human callers
$\Sigma$ is implicit (the session token grants every scope in
$\mathfrak{S}$); for non-human callers $\Sigma$ is explicit and
bounded by a delegation token, so an avatar holding a $\Sigma =
\{\texttt{transfer}\}$ token cannot invoke a \texttt{swap} even
under prompt injection. The invariant decouples authentication
(``who is calling'') from authorization (``what may they do'').
\end{invariant}

\begin{invariant}[Plan-level retry safety]\label{inv:plan-retry}
A plan $\pi = \tau_n \circ \cdots \circ \tau_1$ issued by the
avatar tier may not re-issue any irreversible transition
$\tau_i$ on a reported failure or uncertainty until the avatar
has positively confirmed, by querying ground-truth on-chain
state through the same indexer-bypassing path required by
Invariant~\ref{inv:preflight}, that $\tau_i$'s effect is absent.
\emph{Verify-then-retry} is required; bare retry-on-failure
(``the call returned an error, so I'll call it again'') is
prohibited. The invariant is enforced at the avatar tier
because Invariant~\ref{inv:postobs}'s ``uncertainty'' signal
lives inside one executor-local cycle and does not by itself
prevent the avatar from re-invoking the same tool a few seconds
later.
\end{invariant}

\begin{invariant}[Idempotent intent receiver]\label{inv:idempotent-receiver}
For any externally-originated intent $i$ delivered over an
at-least-once channel (webhook, message queue, retried HTTP),
the receiver hashes a canonical pair $h(i) =
H(\textsf{sender\_id}, \textsf{request\_bytes})$ and replays the
recorded response on every duplicate delivery within a fixed
deduplication window $\Delta$. The invariant decouples
\emph{network duplication} (at-least-once delivery semantics
every public HTTP boundary inherits) from \emph{intent
duplication} (the user wanting the same action twice), and
ensures a single user intent produces a single transition tuple
regardless of how many times the carrier retries. Invariants
\ref{inv:plan-retry} and \ref{inv:idempotent-receiver} are dual:
the former handles ambiguous outcomes, the latter ambiguous
deliveries.
\end{invariant}

\subsection{Enforcement}
\label{sec:enforcement}

The invariants above are stated as properties. This section says
how each is actually checked, because a property nobody enforces
is a wish, and because the difference between the two is where our
own deployment failed (\S\ref{sec:failure-classes}).

Figure~\ref{fig:writepath} gives the write path for a single
transition. The ordering is not incidental: the balance read must
happen after the tuple is resolved (otherwise there is nothing to
check against) and before the preview is rendered (otherwise the
user confirms a transition already known to be inapplicable), and
the key must be derived after confirmation and released before
nothing else.

\begin{figure}[t]
\small
\begin{verbatim}
write_tool(request, confirm):
  tau  = resolve(request)          # may fail -> clarify
  if not authorized(actor, scope, tau):
      return DENIED                # I5
  bal  = balance_at_L1(tau.source) # I2: no cache
  if bal < tau.amount:
      return INSUFFICIENT(bal)     # I2
  if not is_true(confirm):         # I1: parse "false"
      return PREVIEW(tau, bal, fee)
  key  = derive(tau.wallet)        # only past this point
  txid = broadcast(sign(tau, key))
  for _ in range(POLL_WINDOW):     # I3: <= 12s
      if lookup(txid): return SENT(txid)
  return UNCERTAIN(txid, explorer_url)   # I3, I4
\end{verbatim}
\caption{The executor-local write path. Every early return is a
refusal that leaves $\mathrm{real}(\sigma) = \emptyset$; the only
path to a signature runs through all four executor-local checks.
Note that the terminal state on an unconfirmable broadcast is
\texttt{UNCERTAIN}, never \texttt{FAILED}---this is what denies a
retrying planner its trigger (\S\ref{sec:fidelity-theorem}).}
\label{fig:writepath}
\end{figure}

Table~\ref{tab:enforcement} locates each invariant: the process it
runs in, what it does on violation, and the evidence it leaves
behind. The last column matters for a reason worth stating
separately---an invariant that refuses silently is
indistinguishable in production from one that was never reached,
so each check writes a row naming itself. This is what makes the
false-positive analysis in \S\ref{sec:evaluation} possible at all.

\begin{table}[h]
\centering\small
\caption{Where each invariant executes, what it does on violation,
and what it leaves in the audit log.}
\label{tab:enforcement}
\begin{tabular}{cp{0.21\linewidth}p{0.25\linewidth}p{0.28\linewidth}}
\toprule
\# & Runs in & On violation & Evidence \\
\midrule
\ref{inv:auth} & write tool, per call &
  return preview, no key derivation & preview row, \texttt{confirmed} flag \\
\ref{inv:preflight} & write tool, pre-sign &
  refuse, name shortfall & refusal row with observed balance \\
\ref{inv:postobs} & write tool, post-broadcast &
  report \texttt{UNCERTAIN} & outcome column, poll duration \\
\ref{inv:phantom} & write tool, post-broadcast &
  withhold success claim & outcome column, txid \\
\ref{inv:delegation} & auth middleware, pre-dispatch &
  reject call & \texttt{caller\_type}, $\Sigma_\tau$ \\
\ref{inv:plan-retry} & avatar loop, per plan step &
  block re-issue pending verification & run outcome \textsf{escalated} \\
\ref{inv:idempotent-receiver} & HTTP receiver, pre-handler &
  replay recorded response & dedup-hit counter \\
\bottomrule
\end{tabular}
\end{table}

Two of the seven are not executor-local and are worth spelling out,
since a reader cannot infer their mechanism from the write path.

\medskip\noindent\textbf{Scoped delegation
(Invariant~\ref{inv:delegation}).} The check lives in the auth
middleware, ahead of dispatch, so it applies uniformly to every
write-capable tool rather than being re-implemented per tool. A
delegation token carries the triple $(\textsf{user},
\textsf{avatar}, \Sigma)$ as signed claims; the middleware resolves
the requested tool to its required scope set $\Sigma_\tau$ and
evaluates Definition~\ref{def:auth}'s predicate. Because the
predicate is independent of caller type, a human session token and
an avatar delegation token traverse identical code---the avatar
path is not a privileged bypass, which is precisely what makes it
auditable. Placement is the design decision: a per-tool check would
be seven opportunities to forget.

\medskip\noindent\textbf{Idempotent receiver
(Invariant~\ref{inv:idempotent-receiver}).} The check lives at the
HTTP boundary, ahead of any handler, and keys a sliding-window
cache on $H(\textsf{sender\_id}, \textsf{request\_bytes})$ with
$\Delta = 30$\,s. On a hit the recorded response is replayed
verbatim, so the caller cannot distinguish a deduplicated retry
from the original---which is the point, since a caller that could
tell would retry differently. Hashing the raw bytes rather than a
parsed payload is deliberate: it makes the check independent of
schema evolution, at the cost of missing semantically identical
requests that differ in serialization. For an at-least-once
carrier replaying its own buffered bytes, that trade is the right
one.

\medskip
What none of this gives us is assurance that the implementations
match the properties. Failure class~3 is exactly that gap: the
policy in Invariant~\ref{inv:auth} was right and
\texttt{is\_true(confirm)} in Figure~\ref{fig:writepath} was, for a
period, wrong. The enforcement points are where to audit, not
evidence that the audit passed.

\subsection{Multi-tier orchestration: avatars driving wallets}
\label{sec:multi-tier}

A separate \emph{avatar service} runs persistent agents---one per
user-installed template---that act as the user when the user is
not actively signed in. By design, avatars cannot reach any
external system directly: every action they take is a call into
an \emph{executor agent} (the wallet for fund-moving operations;
an analyst service for read-only news, prediction-market, and
sentiment data) over the executor's normal user-facing API. The
threat-model implications were established in
\S\ref{sec:background}.

The orchestration architecture is two-tiered. The avatar-side
tier decides \emph{which executor to call and with what intent},
runs on a trigger (scheduled cadence, external webhook, or
predicate over indexed state), and may interleave calls to
multiple executors within one behavior. The executor-local tier
is unchanged from the single-actor description above: it maps
the intent to a transition tuple in $\mathcal{S}$ and applies
Invariants~\ref{inv:auth}--\ref{inv:phantom}.

The cross-tier security model is
Invariant~\ref{inv:delegation}. The avatar holds a long-lived
\emph{delegation token} bound to a specific $(\textsf{user},
\textsf{avatar}, \Sigma)$ triple, replacing the human's session
token in the auth header but otherwise traversing the same code
path. The wallet does not need any privileged backdoor for
avatars; the token is just a different kind of credential the
same auth middleware recognizes. Conversation threads are tagged
with $(\textsf{user\_id}, \textsf{owner\_type},
\textsf{owner\_id})$ so an avatar's working memory does not
pollute the human user's chat thread, and the audit log
distinguishes human-initiated from avatar-initiated rows by the
\texttt{caller\_type} column. Avatar-initiated rows are by
construction not adversarial (the avatar is software the user
installed and parameterized, not an attacker) and form the
cleanest source for the false-positive rate of each invariant
under realistic load (\S\ref{sec:evaluation}).

\subsection{Failure modes observed in the wild}
\label{sec:failure-classes}

\input{content/figures/transition_flow}

We recorded nine classes of safety bug while building and
iterating on the system. Chronologically these came first: the
system was built, it failed in these ways, and the invariants were
written down afterwards. We present them second, and as a
\emph{check} on \S\ref{sec:deriving} rather than as its source,
because the interesting question is not whether we can name the
bugs we fixed but whether the derivation predicts them. It does:
every class below lands in a branch the case analysis already
identified, and no class required inventing an eighth invariant.
Table~\ref{tab:failure-classes-summary} maps each to the invariant
that catches it; \S\ref{sec:evaluation} reports per-class incidence
in the deployment record
(Table~\ref{tab:failure-classes}).

The exception proves the point. Class~5 (provider gap) is the one
entry with no invariant against it, and it is not an oversight:
failing to find a route is a \emph{liveness} failure, not a
fidelity failure. Nothing executes, so
Definition~\ref{def:fidelity} is satisfied trivially and the
derivation has nothing to say. It appears in the table because it
is a real operational problem, not because the safety analysis
covers it.

\begin{table}[h]
\centering\small
\caption{Failure classes observed in production, by the
invariant that catches each. Class~5 is a liveness failure and
falls outside the fidelity analysis; see text.}
\label{tab:failure-classes-summary}
\begin{tabular}{cp{0.72\linewidth}c}
\toprule
\# & Class (one-line description) & Inv. \\
\midrule
1 & Phantom success: RPC returns hash for a tx that never propagates &
    \ref{inv:phantom} \\
2 & Phantom failure: legitimate tx lands but verify window too short &
    \ref{inv:postobs} \\
3 & Preview-gate bypass via \texttt{confirm="false"} bool coercion &
    \ref{inv:auth} \\
4 & Wrong-wallet selection: source has no balance &
    \ref{inv:preflight} \\
5 & Provider gap: no single bridge covers the route &
    liveness \\
6 & Slow indexer drift: cached \texttt{balanceOf} lags chain state &
    \ref{inv:preflight} \\
7 & Stale agent context: re-emits stale negative conclusion on
    repeat request & \ref{inv:plan-retry} \\
8 & Documented-atomicity drift: ``atomic'' exchange API applies
    half on partial failure & \ref{inv:preflight} \\
9 & Duplicate webhook delivery: at-least-once retry processes intent
    twice & \ref{inv:idempotent-receiver} \\
\bottomrule
\end{tabular}
\end{table}

The two LLM-layer classes (7 stale context, 8 documented-atomicity
drift) are the ones whose defense is least obvious from the
single-cycle invariants \ref{inv:auth}--\ref{inv:phantom}.
Stale-context resolution lives at the avatar tier: skill-level
rules force a fresh state query on repeat write operations
before consulting the agent's accumulated context, so an earlier
turn's negative conclusion cannot block a later turn's
re-evaluation. Documented-atomicity drift is a client-side
preflight (\ref{inv:preflight}): even when an exchange documents
an action as atomic, the client checks the precondition the
exchange claims to enforce, so a server-side validation failure
on the replacement half of a cancel-and-replace cannot leave
the user with no resting order.

\subsection{The fidelity theorem}
\label{sec:fidelity-theorem}

\S\ref{sec:deriving} argued informally that the seven invariants
close every branch of the case analysis. We now discharge that
obligation formally against the target set in \S\ref{sec:fidelity}.
The proof follows the same three branches, which is the point: the
derivation and the theorem are the same argument stated at
different levels of rigour, not two independent claims that happen
to agree.

\begin{theorem}[Execution fidelity]\label{thm:fidelity}
Let $\sigma$ be a session executed under
Invariants~\ref{inv:auth}--\ref{inv:idempotent-receiver}, subject to
faults drawn from \textnormal{F1--F6} and the assumptions of
\S\ref{sec:fidelity}. Then $\sigma$ satisfies execution fidelity:
$\mathrm{real}(\sigma) \in \{\emptyset, \{\tau_p\}\}$.
\end{theorem}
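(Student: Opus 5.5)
The argument is a proof by contradiction organized around the case split of \S\ref{sec:deriving}. Suppose $\mathrm{real}(\sigma) \notin \{\emptyset, \{\tau_p\}\}$. Since $\mathrm{real}(\sigma)$ is a finite multiset of ledger-accepted transitions, it must contain one of three things: (a)~an element $\tau \neq \tau_p$ not derived from the rendered tuple; (c)~an element derived from $\tau_p$ but not equal to it; or (b)~at least two copies of $\tau_p$. I will first show that these three cases are exhaustive. The argument is that any multiset over transitions that is neither empty nor the singleton $\{\tau_p\}$ either contains a non-$\tau_p$ element or gives $\tau_p$ multiplicity at least two. I then split the non-$\tau_p$ case into (a) and (c) according to whether the element's tuple passed through the preview for $\tau_p$. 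Each case is then refuted using the invariants Table~\ref{tab:invariants} assigns to it.

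The first structural lemma is that every element of $\mathrm{real}(\sigma)$ must have passed through the executor write path of Figure~\ref{fig:writepath} up to \texttt{broadcast}. By the signature-authorization condition of Definition~\ref{def:transition}, any ledger-accepted transition that decrements a coordinate carries a signature under $\mathrm{pk}(w,c,a)$. Keys are uncompromised and executor code is not bypassable, so the only place a key is derived is the \texttt{derive} line.

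\textbf{Case (a).} Reaching \texttt{derive} requires two things. First, $\mathrm{auth}(\alpha,\Sigma,\tau)=1$ by Invariant~\ref{inv:delegation}, which disposes of F6 callers acting outside their scope. Second, a \texttt{confirm=true} call carrying the same tuple that was previously rendered, by Invariant~\ref{inv:auth}. Since the session renders the single tuple $\tau_p$, it follows that $\tau=\tau_p$.

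\textbf{Case (c).} The pre-state check of Invariant~\ref{inv:preflight} is run against L1 ground truth at broadcast time. Under F2, if $s\notin\mathrm{dom}(\tau_p)$, the write path refuses outright instead of adapting the transition. Assuming an honest RPC, the broadcast tuple therefore equals the confirmed one, and whatever lands is exactly $\tau_p$ applied on its domain.

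\textbf{Case (b)} is where I expect the real work. The key step is to enumerate every route by which a second signing of $\tau_p$ could be triggered within $\sigma$, and to show that each route is closed by exactly one invariant.

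\begin{itemize}
\item \emph{Channel redelivery (F5).} A duplicate carrier delivery within $\Delta$ hashes to the same $h(i)$, so Invariant~\ref{inv:idempotent-receiver} replays the recorded response without invoking the handler.
\item \emph{Planner re-issue (F4).} Any second \texttt{confirm=true} call for $\tau_p$ after a first broadcast is gated by Invariant~\ref{inv:plan-retry}, which permits it only once ground truth positively shows $\tau_p$'s effect is absent. If the first broadcast landed, that check fails and the re-issue is blocked. If it did not land, the first copy contributes nothing to $\mathrm{real}(\sigma)$ and the count stays at most one.
\item \emph{Outcome ambiguity (F3).} Invariants~\ref{inv:postobs} and~\ref{inv:phantom} ensure the executor never reports \texttt{FAILED} on an unverified outcome. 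This removes the trigger for an unverified retry in the attended case, where Invariant~\ref{inv:plan-retry} is not the enforcement point.
\end{itemize}

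The delicate point is an interleaving subtlety in the F4 route. A verify-then-retry can race a first transaction that is still pending, since a ledger can lack the transaction at query time and then include it later. I would handle this by strengthening ``positively confirmed absent'' to mean absent or provably unincludable, for instance a nonce-consumed or expired transaction. Alternatively, I would state the no-reorg/finality assumption of \S\ref{sec:fidelity} so that it covers this case, and flag it as one of the three limits of \S\ref{sec:fidelity-limits}.

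A second gap lies in the F5 route, because the window $\Delta$ of Invariant~\ref{inv:idempotent-receiver} is finite. A redelivery that arrives after $\Delta$ has expired would be treated as a fresh request. I would close this by defining such a late delivery to begin a new session rather than a continuation of $\sigma$, which again needs an explicit assumption.

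With the three cases closed, $\mathrm{real}(\sigma)$ contains no element other than $\tau_p$, and contains $\tau_p$ with multiplicity at most one. Hence $\mathrm{real}(\sigma) \in \{\emptyset, \{\tau_p\}\}$, which is execution fidelity.
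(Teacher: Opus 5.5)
Your proposal is correct and is essentially the paper's proof. It uses the same three-way split of a violation into unrendered, duplicated, and derived-but-different elements, closed respectively by Invariants~\ref{inv:auth} and~\ref{inv:delegation} via the signature condition, by Invariants~\ref{inv:postobs}/\ref{inv:phantom}, \ref{inv:plan-retry} and~\ref{inv:idempotent-receiver} along the executor, planner and channel paths, and by Invariant~\ref{inv:preflight}. Your treatment of post-$\Delta$ redeliveries as new sessions matches the paper's convention in \S\ref{sec:fidelity-limits}, and the pending-transaction race you flag on the F4 route is a real subtlety the paper's proof passes over by treating ``positively confirmed absence'' as final, so your proposed strengthening refines the argument rather than departing from it.
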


\begin{proof}
Since $\mathrm{real}(\sigma)$ is a multiset of transitions, a
violation takes one of exactly three forms: it contains a
transition other than $\tau_p$; it contains $\tau_p$ with
multiplicity greater than one; or---the degenerate reading of the
first---it contains a transition derived from $\tau_p$ but not
equal to it. We rule out each.

\emph{(i) No transition other than $\tau_p$.} By the signature
condition of Definition~\ref{def:transition}, any transition that
decreases a balance carries a signature from a key derivable from
$w$, so every element of $\mathrm{real}(\sigma)$ is preceded by a
key derivation inside the executor. Invariant~\ref{inv:auth} makes
key derivation conditional on a confirmation recorded against a
rendered tuple, and a session renders exactly one such tuple
(Definition~\ref{def:session}). Hence every element of
$\mathrm{real}(\sigma)$ equals $\tau_p$. Note that F1 is untouched
by this step: $\tau_p$ may be a poor rendering of $u$, and the
claim is only that nothing outside the rendered tuple executes.
Under F6 the confirmation is a scope admission rather than a human
act; Invariant~\ref{inv:delegation} rejects a caller whose scope
set does not cover $\tau_p$, yielding $\mathrm{real}(\sigma) =
\emptyset$.

\emph{(ii) No duplicate.} A second copy of $\tau_p$ can arise on
three paths. \emph{Within one executor cycle:} a retry would have
to be triggered by a definite failure report, but under F3 an
unobservable outcome is reported as uncertainty rather than failure
(Invariant~\ref{inv:postobs}), and a hash that cannot be confirmed
is never reported as success (Invariant~\ref{inv:phantom}); the
trigger is therefore not produced. \emph{From the planner (F4):}
Invariant~\ref{inv:plan-retry} forbids re-issuing $\tau_p$ until
its absence has been positively confirmed against ground-truth
state, so a re-issue following a landed first attempt is refused.
\emph{From the channel (F5):} Invariant~\ref{inv:idempotent-receiver}
replays the recorded response for a duplicate delivery within the
window $\Delta$, so the duplicate produces no second session and
hence no second transition.

\emph{(iii) Nothing derived-but-different.} Under F2 the pre-state
may have drifted so that $s \notin \mathrm{dom}(\tau_p)$ at
broadcast time. Invariant~\ref{inv:preflight} rejects the
transition in that case rather than adjusting it to fit the
available balance, so the outcome is $\emptyset$ and not a
transition the user never saw. This step is why preflight must read
ground truth rather than a cached indexer value: a stale read can
report $s \in \mathrm{dom}(\tau_p)$ when it is false, which admits
exactly the derived-but-different case the invariant exists to
exclude.

Every fault in F1--F6 is therefore either excluded or forced to
$\emptyset$, except F1, which by construction acts only on the
choice of $\tau_p$ itself.
\end{proof}

\begin{corollary}[Residual-risk localization]\label{cor:localization}
Under Invariants~\ref{inv:auth}--\ref{inv:idempotent-receiver}, the
only fault that can produce an unintended irreversible effect is
F1. Every other fault class yields $\emptyset$.
\end{corollary}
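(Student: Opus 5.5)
The corollary follows from Theorem~\ref{thm:fidelity} together with one extra step: attributing each possible realized effect to a fault class. The plan is to split on the two values that $\mathrm{real}(\sigma)$ can take under the theorem, and to show that in each case the only way to obtain an \emph{unintended} irreversible effect is through the choice of $\tau_p$. Call an effect unintended if $\mathrm{real}(\sigma)$ contains some transition that does not render $u$. The theorem then immediately disposes of the case $\mathrm{real}(\sigma)=\emptyset$: it is not an irreversible effect at all, so it is intended vacuously.

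The remaining case is $\mathrm{real}(\sigma)=\{\tau_p\}$. Here the effect is unintended exactly when $\tau_p$ fails to render $u$. I would argue that, among F1--F6, only F1 acts on the relation between $\tau_p$ and $u$. Faults F2--F6 are stated in \S\ref{sec:fidelity} as acting downstream of the commitment to $\tau_p$: on the pre-state (F2), on outcome observation (F3), on re-invocation (F4), on delivery multiplicity (F5), and on caller identity (F6). None of them changes which tuple is rendered, and Definition~\ref{def:session} fixes that tuple as a single preview. Any gap between $\tau_p$ and $u$ can therefore be attributed only to F1.

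For the second sentence of the corollary, I would revisit the three branches of the theorem's proof and record, for each fault other than F1, that its effect is not merely ``no violation'' but specifically the empty outcome. F2 is rejected by Invariant~\ref{inv:preflight}, giving $\emptyset$. F6 with an inadequate scope is rejected by Invariant~\ref{inv:delegation}, again $\emptyset$. F3, F4 and F5 can only add copies, and those copies are suppressed: the extra element never appears, so the fault's own contribution to $\mathrm{real}(\sigma)$ is $\emptyset$.

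The main obstacle is this last point. For F3--F5, a session in which one of these faults occurs may still legitimately realize $\{\tau_p\}$ from its first attempt. The claim ``yields $\emptyset$'' must therefore be read as ``contributes nothing beyond $\tau_p$'', not as ``the whole session is empty.'' I would make this reading explicit, phrasing the second sentence as: every other fault's marginal contribution to the realized multiset is empty. That reading is exactly what the duplicate-suppression branch (ii) of the theorem establishes.
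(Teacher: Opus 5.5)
Your proposal is correct and takes the paper's route: the paper gives no separate proof, deriving the corollary from the closing line of the proof of Theorem~\ref{thm:fidelity} (every fault other than F1 is ``either excluded or forced to $\emptyset$,'' while F1 ``acts only on the choice of $\tau_p$ itself''), and your ``marginal contribution'' reading of ``yields $\emptyset$'' is exactly the ``excluded'' half of that phrase. Two small fixes: you also need that same reading for F6 when the caller's scope \emph{does} cover $\tau_p$, since the session then realizes $\{\tau_p\}$ rather than $\emptyset$; and F5 acts upstream of the planner, where it would spawn a second session, not downstream of the commitment, though your conclusion that it cannot change the rendered tuple still holds.
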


Corollary~\ref{cor:localization} is the paper's actual claim, and
it is deliberately modest. We do not solve the intent-mapping
problem; we confine it. The engineering value is that an unbounded
semantic question---did the agent understand the user---has been
reduced to a bounded one: does this four-coordinate tuple and
amount match what was asked. That question is answerable by a human
in a fraction of a second, which is what makes the confirm gate a
usable control rather than a formality.

\subsection{Where the guarantee stops}
\label{sec:fidelity-limits}

Three limits, stated plainly because each is load-bearing.

\medskip\noindent\textbf{The unattended case is materially weaker.}
In attended operation a human evaluates ``does $\tau_p$ render
$u$'' at the confirm gate. In unattended operation nobody does. The
bound degrades from a check on the tuple's \emph{content} to
Invariant~\ref{inv:delegation}'s check on the caller's
\emph{authority}, and those are not the same thing: an avatar
holding $\Sigma = \{\texttt{transfer}\}$ that resolves ``send one
unit'' into a transfer of ten is inside its scope and executes.
Fidelity holds---the ten-unit transfer was previewed, and it
happened once---while the user's intent is still violated. Scope
sets today name tools; extending them to carry predicates over the
transition tuple (per-period caps, destination allowlists, rate
limits) would let $\Sigma$ bound content as well as authority and
is the most valuable single extension we are aware of
(\S\ref{sec:discussion}).

\medskip\noindent\textbf{Duplicate suppression is
$\Delta$-bounded.} Invariant~\ref{inv:idempotent-receiver} holds
within its deduplication window. A redelivery arriving after
$\Delta$ opens a \emph{new} session, and the theorem is stated per
session, so it says nothing about the pair. Choosing $\Delta$ is
therefore a safety parameter and not a cache-tuning decision, a
point the Stripe-style idempotency-key literature makes for
payments and which transfers directly.

\medskip\noindent\textbf{Plans get prefix consistency, not
atomicity.} For a plan $\pi = \tau_n \circ \cdots \circ \tau_1$,
applying the theorem stepwise gives that $\mathrm{real}(\pi)$ is a
prefix of $(\tau_1, \ldots, \tau_n)$ with each element previewed
and no duplicates. That is strictly weaker than atomicity: a plan
interrupted after $\tau_i$ leaves a reachable intermediate state
that no one asked for---a swap completed but its follow-on bridge
not. Fidelity is a per-transition property and cannot see this;
lifting it to a transactionality predicate over compound
transitions is the open problem we regard as most worth attacking
(\S\ref{sec:discussion}).

\medskip
Finally, a note on how to read the theorem in practice. It is
conditional on the invariants holding, so its useful form is the
contrapositive: every failure class in
Table~\ref{tab:failure-classes-summary} is an instance of some
invariant not in fact holding in deployed code. Failure class~3 is
the sharpest example---the policy stated by
Invariant~\ref{inv:auth} was correct, and a string-to-boolean
coercion made the implementation not satisfy it. The theorem tells
you where to look; it does not certify the code.

%% file: content/figures/architecture.tex
\begin{figure*}[t]
\centering
\begin{tikzpicture}[
  box/.style={rectangle, rounded corners=2pt, draw=black!70, thick,
              minimum width=3.0cm, minimum height=0.85cm,
              align=center, font=\small\sffamily},
  agent/.style={box, fill=blue!8, draw=blue!60!black},
  tool/.style={box, fill=green!10!white, draw=green!50!black},
  sign/.style={box, fill=red!10, draw=red!55!black},
  external/.style={box, fill=gray!15, draw=gray!60},
  pin/.style={font=\scriptsize, text=red!55!black, anchor=west,
              align=left, inner sep=1pt},
  tie/.style={gray!55, thin},
  flow/.style={-{Stealth[length=2.4mm]}, thick, black!75},
  edgelbl/.style={font=\scriptsize, text=black!60, align=center},
]
\node[agent]    (user)    at (0, 0.0)  {User request\\\scriptsize natural language};
\node[agent]    (agent)   at (0,-1.5)  {Agent\\\scriptsize LLM + skill routing};
\node[tool]     (tool)    at (0,-3.0)  {Write tool\\\scriptsize preview / confirm};
\node[sign]     (sign)    at (0,-4.5)  {Signer\\\scriptsize key derivation};
\node[external] (api)     at (0,-6.0)  {RPC / exchange API};
\node[external] (chain)   at (0,-7.5)  {Public ledger\\\scriptsize EVM / BTC / HL-L1};

\draw[flow] (user)  -- (agent);
\draw[flow] (agent) -- (tool);
\draw[flow] (tool)  -- (sign);
\draw[flow] (sign)  -- (api);
\draw[flow] (api)   -- (chain);

\draw[flow, dashed] (chain.west) -- ++(-1.4,0)
  |- node[edgelbl, pos=0.25, anchor=east, xshift=-1mm] {verify\\poll} (tool.west);

\node[pin] (p1) at (2.0,-2.75) {\textbf{I1}~confirm gate};
\node[pin] (p2) at (2.0,-3.25) {\textbf{I2}~pre-state sufficiency};
\node[pin] (p3) at (2.0,-7.25) {\textbf{I3}~post-state observability};
\node[pin] (p4) at (2.0,-7.75) {\textbf{I4}~no phantom success};

\draw[tie] (tool.east)  -- (1.9,-3.0);
\draw[tie] (chain.east) -- (1.9,-7.5);
\end{tikzpicture}
\caption{Orchestration pipeline in \sysname{}. A natural-language
request flows top to bottom: the agent resolves it to a single
transition tuple, the write tool renders that tuple and holds it until
confirmation (\textbf{I1}) and a ground-truth balance check
(\textbf{I2}) both pass, and only then is a key derived. The dashed
return path is the post-broadcast verify poll, which supplies
\textbf{I3} and \textbf{I4} before the tool reports any outcome back
to the agent. Invariants are stated in
\S\ref{sec:orchestration}; the pins here mark where each one executes.}
\label{fig:architecture}
\end{figure*}

%% file: content/figures/transition_flow.tex
\begin{figure*}[t]
\centering
\begin{tikzpicture}[
  state/.style={rectangle, rounded corners=2pt, draw=black!70, thick,
                minimum width=2.4cm, minimum height=0.75cm,
                align=center, font=\small\sffamily, fill=gray!8},
  start/.style={state, fill=blue!10, draw=blue!55!black},
  final/.style={state, fill=green!12, draw=green!55!black},
  halt/.style={rectangle, rounded corners=2pt, draw=red!50!black,
               dashed, thick, minimum width=3.4cm, minimum height=0.7cm,
               align=center, font=\scriptsize\sffamily, fill=red!4},
  arrow/.style={-{Stealth[length=2.2mm]}, thick, black!75},
  fail/.style={-{Stealth[length=2.2mm]}, dashed, red!60!black, thick},
  breach/.style={-{Stealth[length=2.2mm]}, dashed, red!70!black, thick},
  faillbl/.style={font=\scriptsize, text=red!55!black, align=center},
]
\node[start] (intent)  at (0, 0.0)  {Intent};
\node[state] (plan)    at (0,-1.15) {Plan};
\node[state] (preview) at (0,-2.30) {Preview};
\node[state] (confirm) at (0,-3.45) {Confirm};
\node[state] (sign)    at (0,-4.60) {Sign};
\node[state] (bcast)   at (0,-5.75) {Broadcast};
\node[final] (verify)  at (0,-6.90) {Verify};

\foreach \a/\b in {intent/plan, plan/preview, preview/confirm,
                   confirm/sign, sign/bcast, bcast/verify}
  \draw[arrow] (\a) -- (\b);

\node[halt] (h2) at (4.6,-3.45) {refuse: preview stale\\(class 2)};
\node[halt] (h4) at (4.6,-4.60) {refuse: insufficient balance\\(classes 4, 8)};
\node[halt] (h1) at (4.6,-6.90) {report \emph{uncertain}, not failed\\(classes 1, 2)};

\draw[fail] (confirm.east) -- (h2.west);
\draw[fail] (sign.east)    -- (h4.west);
\draw[fail] (verify.east)  -- (h1.west);

\draw[breach] (plan.west) -- ++(-1.0,0) |- (sign.west);
\node[font=\small, text=red!70!black] at (-2.2,-2.875) {$\times$};
\node[faillbl, anchor=east] at (-2.45,-2.875)
  {gate bypassed\\(class 3)};
\end{tikzpicture}
\caption{State-machine view of a single transition. Solid arrows are
the happy path. Dashed arrows on the right are \emph{guarded exits}:
outcomes a safety invariant produces deliberately, annotated with the
failure class from \S\ref{sec:failure-classes} that motivated the
guard. The dashed arrow on the left is different in kind---it is the
one edge that must not exist, the boolean-marshalling coercion
(class~3) that let an unconfirmed call reach the signer. The
distinction matters for reading the ablations in
\S\ref{sec:evaluation}: disabling a guard removes a right-hand exit,
while a bypass re-opens the left-hand edge.}
\label{fig:transition-flow}
\end{figure*}

%% file: content/05_evaluation.tex
\section{Empirical Evaluation}
\label{sec:evaluation}

We report two kinds of evidence, and it is worth separating them
clearly because they answer different questions and carry very
different weight.

The \emph{controlled} evidence (\S\ref{sec:controlled}) is the
evaluation proper: an adversarial-prompt suite targeting each
failure class, run across four backing models, with each invariant
ablated in turn. It is designed to answer whether the invariant
stack changes agent behavior, and it supports a comparison because
conditions differ by one variable at a time.

The \emph{deployment record} (\S\ref{sec:deployment}) is not an
experiment and we do not present it as one. It is 108 write
operations the system performed on real wallets, which establishes
that the system runs in production rather than in a simulator, and
it is the source of the failure taxonomy in
\S\ref{sec:failure-classes}. There is no control group and no
counterfactual; a reader should take from it that the failure
classes are real and were observed, not that the invariants are
thereby validated. Presenting a track record as though it were a
result is a common enough move in systems papers that we prefer to
disclaim it explicitly.

\subsection{Controlled experiments}
\label{sec:controlled}
\label{sec:methodology}

\paragraph{Adversarial prompt suite.}
A labeled prompt dataset (\texttt{data/adversarial\_prompts.csv},
$N{=}60$, seven to eight prompts per failure class) targets each
class of \S\ref{sec:orchestration}; class 9 (duplicate webhook) is
exercised by a separate HTTP-replay harness. Each row carries a
target class, an expected outcome label, the target tool, and a
ground-truth rationale. We run the dataset against four backing
models for the wallet's planning loop: Claude, MiniMax-M2.7-highspeed,
Gemini-2.5-flash-lite, and OpenAI~gpt-5-nano, all routed through
the wallet's per-tenant LiteLLM gateway. The Gemini path required
two adapter-level fixes (MCP namespace-prefix stripping and a
validation-hint augmentation for tool-result errors); pre-fix
Gemini runs are not reported.

\paragraph{Invariant ablation.}
The same suite is replayed with one invariant disabled at a time
via a request-scoped header gated by an eval-only env var, so
every condition runs against identical code modulo one boolean.
Tool-level ablations (I\textsubscript{2}, I\textsubscript{4}) are
toggled on the broadcasting executor against the operator's wallet.
The five code-gate invariants (I\textsubscript{1},
I\textsubscript{3}, I\textsubscript{5}, I\textsubscript{6},
I\textsubscript{7}) are exercised by deterministic in-process
drivers that mirror the production code paths (the broadcasting
executor would sign every preview-class prompt on first contact;
the drivers separate the gate's logic from the broadcast side
effect). Two caveats on that choice, which we state here rather
than deferring to the driver sources: the in-process drivers were
authored alongside the gates they exercise, so they test the
gate's logic against inputs chosen by the same author rather than
against production traffic; and a trace-replay version that
toggles each gate inside the broadcasting executor over the
recorded audit log would be the stronger design, which we flag as
follow-up (\S\ref{sec:discussion}).

\paragraph{Naive-ReAct baseline.}
The baseline~\cite{yao2023react} runs the same prompt suite against
a bare ReAct loop with the same wallet tools, but with
preview/confirm, pre-state preflight, and post-broadcast
verification removed at the system-prompt level. Each baseline row
uses the same backing model as its stacked comparison row and the
same tool surface, so a delta is attributable to the orchestration
layer rather than to the model or the tool code. We do not attempt
to reproduce closed-source third-party Web3 agents: reproducing a
full commercial stack inside a comparable sandbox is out of scope,
and a partial reproduction would be more misleading than
informative.

One property of this baseline materially limits what it can show,
and it is easy to miss. The baseline is \emph{single-turn}: the
runner presents the prompt and the tool surface and observes
whether the model emits a \texttt{tool\_use} block for a write
tool. It does not execute the tool, return a tool result, or
continue the agent loop. The resulting number is therefore the
model's \emph{stated intent} under naive scaffolding, not a
measurement of what would happen in a production loop where the
call fires, a result comes back, and the model gets another chance
to reconsider. A high baseline pass rate does not imply the model
is safe in deployment. We use the baseline strictly as a control
for attribution---the finding is the \emph{delta} between a
stacked row and its same-model baseline, never either row read on
its own. Capability-uncertainty refusals (the model declines
because it cannot complete the call schema) are classified
separately from safety-driven refusals by the per-prompt judge, so
the baseline reflects safety reflex rather than incidental schema
confusion.

\paragraph{Metrics and scoring.}
We report pass rate: the fraction of prompts on which the agent
produced the expected-outcome label, with 95\% bootstrap intervals
from 1000 resamples of the per-row Bernoulli trials. Those
intervals are \emph{unpaired}, which deserves a note because it is
the wrong choice for half of what we do with them. The stack and
its baseline see the same 60 prompts, so a paired test---McNemar
over per-prompt outcomes---would be strictly more powerful for a
with-vs-without delta on a fixed backing model, and reporting the
unpaired interval there costs us statistical power we could have
had. We report unpaired for uniformity with the cross-model rows,
where the prompts are shared but the model differs and the paired
design is less clean, and flag the paired test as follow-up
(\S\ref{sec:discussion}). The reader should treat the intervals as
conservative for within-model comparisons and appropriate for
across-model ones. False-positive rates on non-adversarial prompts
and per-invariant $p$-values are out of scope at $N{=}60$.

Each response is classified by a two-stage scorer: a
Gemini-2.5-flash-lite
LLM-as-judge at temperature 0 with schema-constrained output
proposes an outcome category and a satisfies / fails verdict; a
deterministic post-hoc override then applies a safety-equivalence
rule (any conservative refusal, preview, or clarification request
counts as a pass when the expected outcome is non-\texttt{succeed},
since refusing on an under-specified prompt is itself correct
safety behavior). The override is reproducible to the byte given
the judge's classifications; the planning path remains
non-deterministic by construction (live hosted-API calls). A
spot-check re-run on Claude moved the headline by under 3 pp with
per-prompt agreement above 90\%.

\input{content/figures/experiment_forest_claude}
\input{content/figures/experiment_forest_minimax}
\input{content/figures/experiment_forest_gemini}
\input{content/figures/experiment_forest_openai_nano}

Figures~\ref{fig:experiments-claude}--\ref{fig:experiments-openai-nano}
report the per-row pass rate and 95\% bootstrap interval for each
backing-model column; the dashed reference line is the same
model's full-stack adversarial pass rate so each ablation drop
reads against a common ceiling.

\paragraph{Results: write-aggressive pair.}
On Claude the full invariant stack catches \emph{81.7\%} of the
$N{=}60$ suite; the naive-ReAct baseline catches \emph{10.2\%}, a
\emph{71.5}-pp safety delta attributable to the orchestration layer.
On MiniMax-M2.7-highspeed the same comparison reads \emph{86.7\%}
vs \emph{10.0\%} (\emph{76.7}-pp gap). The cross-model agreement on
direction and magnitude is the empirical claim: the safety
contribution replicates across two different write-aggressive
planning models, not just the one the system was tuned on.
Bootstrap intervals overlap in the $78$--$92\%$ range (full stack)
and the $3$--$19\%$ range (baseline), non-overlapping by a wide
margin.

\paragraph{Results: write-cautious and methodology-bounded.}
Gemini-2.5-flash-lite reaches \emph{60.0\%} with the stack and
\emph{56.7\%} on the naive baseline---a $\sim$3-pp gap, because the
model declines write tools on $46/60$ baseline prompts on its own.
We read this as evidence that the stack's measurable safety
contribution depends on the underlying model's
write-aggressiveness: on a model that hesitates natively, the
stack overlaps the model's behavior on this single-turn suite,
though it remains necessary for multi-turn cases
(Invariants~\ref{inv:postobs}, \ref{inv:phantom},
\ref{inv:delegation}, \ref{inv:plan-retry}) that single-turn
refusal cannot guard against.
OpenAI~gpt-5-nano reproduces the write-aggressive baseline pattern
($41/60$ issue a write, point $31.7\%$ at $N{=}60$) but its
safety-stack cells were collected at smaller samples ($N{=}19$--$22$)
because per-turn latency on the 30-tool MCP surface routinely
exceeded the 120-second polling budget; sessions log $150$--$350$
seconds and $185\text{k}$--$372\text{k}$ accumulated input tokens
before terminating on the wallet side. Within the achievable
sample (\emph{36.8\%}, \emph{31.8\%}, \emph{40.0\%} for
adv / I\textsubscript{2}-off / I\textsubscript{4}-off) the lift
overlaps the baseline interval; conditional on the $\sim$40\% of
sessions that converged the agent passed $7/9$, $7/8$, $8/10$
respectively---consistent with the stack working when gpt-5-nano
completes a session. We report the column as a
methodology-bounded measurement.

\paragraph{Per-invariant ablation.}
Tool-level I\textsubscript{2} (balance preflight) and
I\textsubscript{4} (no phantom success) ablations move pass rate
by under 4 pp on the three full-$N$ models: Claude
\emph{83.3\%}/\emph{83.3\%} (vs \emph{81.7\%}), MiniMax
\emph{86.7\%}/\emph{85.0\%} (vs \emph{86.7\%}), Gemini
\emph{61.7\%}/\emph{56.7\%} (vs \emph{60.0\%}). Two reasons:
(i)~the agent also reads balances through the read-only
\texttt{get\_all\_balances} tool and reasons about insufficiency
in its own context, blunting tool-level
I\textsubscript{2} removal; (ii)~under the safety-equivalence
scorer (any conservative refusal counts), both tool-level
ablations sit below the measurement floor at $N{=}60$. The
cleanly discriminating ablations are the code-gate rows:
in-process drivers for I\textsubscript{1}, I\textsubscript{3},
I\textsubscript{5} collapse from $20/20$ control to $5$--$8/20$
ablated, and the cross-tier gates (I\textsubscript{6},
I\textsubscript{7}) collapse to $0/20$. These rows establish that
each gate's code path is load-bearing on the failure class it
targets; the tool-level rows remain in the figure as evidence that
agent-level reasoning absorbs single-tool removal, not as a
per-invariant contribution estimator.

\paragraph{Reproducibility.}
The prompt dataset, every result CSV, the per-label scorer, and
the ablation drivers are versioned alongside the paper source but
are not released with it, for the wallet-privacy reason given
above. We therefore make the weaker claim the artifact supports:
the methodology is specified here in enough detail to be
reimplemented, and the scoring is deterministic given a fixed
judge output, but the numbers are not at present independently
re-derivable from published material. A privacy-preserving release
---the controlled-experiment material, which touches no production
wallet, separated from the on-chain extract---is the obvious
remedy and is not yet done. The deterministic override plus
frozen-judge output makes re-scoring a frozen CSV
byte-reproducible. The hosted
planning path is non-deterministic by construction; we pin the
model identifier and the wallet commit SHA in the result-CSV
commit messages. A full per-seed variance study is flagged as
future work (\S\ref{sec:discussion}).

\subsection{Deployment record}
\label{sec:deployment}

\paragraph{Scope and privacy.}
We report every write-tool invocation logged since the system's
first production use on 2026-04-07. Failure-mode incidents
(Table~\ref{tab:failure-classes}) are human-curated from audited
sessions; the underlying CSV is frozen at paper-build time. To
balance verifiability against the privacy of the operator's
production wallets we report counts and aggregates rather than
hashes and addresses. The per-transaction extract behind these
tables carries the transaction IDs, but we do not release it: the
operations are the operator's own, and publishing the hashes
under a named paper would link a real wallet's full balance
history to an identity permanently. The aggregate tables are
therefore the unit a reader can check, and the on-chain record
should be read as evidence that the system runs in production
rather than as an independently verifiable result.
Adversarial-suite trials execute against freshly-generated test
wallets, never against production wallets.

\input{content/tables/primitive_breakdown}
\input{content/tables/chain_coverage}
\input{content/tables/failure_classes}
\input{content/tables/avatar_fleet}

Table~\ref{tab:primitives} breaks the on-chain record down by the
eight transition primitives of \S\ref{sec:formalism}. Every
primitive has at least one real operation behind it, with
\textsc{Bridge} and \textsc{Trade} dominating---a function of the
operator's usage pattern (cross-chain swaps and exchange activity)
rather than a design choice about which primitives to support.
Table~\ref{tab:chains} pivots the same data by source chain: eight
distinct chains / substrates, including Bitcoin (native transfers),
Hyperliquid's non-EVM L1, and Solana (a non-EVM smart-contract
chain). Several chains are L2 rollups or sidechain-style execution
layers; their specific trust surface (sequencer censorship,
withdrawal timer) is not engaged by the seven invariants, which
treat L2-only inclusion as confirmed for the user's purposes; a
sequencer-independent deployment that pushed observability down to
L1 inclusion proofs is out of scope for this paper.

Table~\ref{tab:failure-classes} enumerates every audited safety-bug
incident, tagged by the class it belongs to
(\S\ref{sec:orchestration}) and linked to the fix PR. The list is
not exhaustive---classes 1 (phantom success) and 6 (indexer drift)
are likely under-counted because early incidents self-resolved
without detailed audit---but it covers every incident where the
session ID, failure shape, and subsequent fix were recorded.

Table~\ref{tab:avatar-fleet} reports the production avatar fleet's
outcome distribution. The fleet runs continuously: a DCA template
fires on a daily schedule and a market-analysis template runs on a
news-event trigger. In the data window, $32\%$ of DCA runs end in
\textsf{success}, $32\%$ in \textsf{escalated} (the avatar deferred
to the human rather than re-issue a write), $24\%$ in
\textsf{error} (executor hard failure, correctly surfaced rather
than retried), and $11\%$ in \textsf{aborted} (safety stop). None
of the $87$ DCA runs produced a duplicated on-chain transition; the
realized double-spend in the case study below predates the
deployment of Invariant~\ref{inv:plan-retry}. Avatar runs are
non-adversarial by construction (the avatar is software the user
parameterized, not an attacker), so the share of runs the
executor or avatar tier blocks is a direct lower bound on the
false-positive rate of
Invariants~\ref{inv:postobs}--\ref{inv:plan-retry} under realistic
load.

\paragraph{Case study: Phantom failure, realized double-spend.}
On 2026-04-30 a DCA avatar issuing a 200~USDT~$\to$~BTC daily swap
saw the executor report ``failed,'' retried, and confirmed both
attempts on chain (two Ethereum transactions, three minutes apart)
for a total of 400~USDT against an intended 200. The executor-local
fix from an earlier near-miss had shipped, but the avatar tier
treated the executor's first response as authoritative at plan
granularity. Under Invariant~\ref{inv:plan-retry} the second
transaction does not happen: the avatar must positively verify the
first attempt's on-chain absence before re-issuing, which fails
here because the first attempt did land. The incident is the
empirical motivation behind that invariant; a verify-then-retry
implementation at the avatar tier shipped the same week.

\paragraph{Case study: Webhook duplicate delivery.}
On 2026-05-01 an external trigger source delivered a webhook to
the avatar service. The pod processed the request and started a
rebalance run, but the edge proxy returned 502 to the trigger
source despite upstream success. The trigger retried the same
byte-for-byte payload eight seconds later. Without idempotency the
duplicate would have either started a second run or returned 409
on the in-flight first run, orphaning the original from the
trigger source's logs. The fix shipped the same day: a
sliding-window cache keyed by $H(\textsf{sender\_id},
\textsf{request\_bytes})$ with $\Delta = 30$\,s; on the retry the
cache hits and the original response is returned verbatim. This
is the failure-class~9 incident and the motivation behind
Invariant~\ref{inv:idempotent-receiver}; the mechanism echoes
Stripe-style idempotency keys, and the contribution is the
observation that agent systems inherit at-least-once delivery
semantics from every public HTTP boundary.

\paragraph{Case study: Cross-venue prediction-market lifecycle.}
We built two parallel implementations of the prediction-market
trade tool---one per CTF-Exchange-style venue---before extracting
any shared abstraction, on the principle that one known-good
implementation is a guess at what generalizes whereas two are
data. The two-venue exercise surfaced two specifics neither
single-venue track would have caught: one venue's matching engine
enforces \texttt{takerAmount=1} as a sentinel on \emph{both sides}
of the book; the other deploys a separate Exchange contract per
market so an ERC-1155 \texttt{setApprovalForAll} preflight must be
checked per-market rather than against a global constant. Both
findings are now regression-pinned and illustrate that the
formalism's venue-agnosticism (\S\ref{sec:formalism}) does not
imply implementation-agnosticism.

%% file: content/figures/experiment_forest_claude.tex
\begin{figure}[t]
\centering
\resizebox{\linewidth}{!}{%
\begin{tikzpicture}[
  font=\small\sffamily,
  x=1mm,
  y=4.6mm,
]
  \draw[black!60] (0, 0.5) -- (100, 0.5);
  \draw[black!60] (0, 0.5) -- (0, 0.30000000000000004);
  \node[anchor=north, font=\scriptsize] at (0, 0.30000000000000004) {0\%};
  \draw[black!60] (25, 0.5) -- (25, 0.30000000000000004);
  \node[anchor=north, font=\scriptsize] at (25, 0.30000000000000004) {25\%};
  \draw[black!60] (50, 0.5) -- (50, 0.30000000000000004);
  \node[anchor=north, font=\scriptsize] at (50, 0.30000000000000004) {50\%};
  \draw[black!60] (75, 0.5) -- (75, 0.30000000000000004);
  \node[anchor=north, font=\scriptsize] at (75, 0.30000000000000004) {75\%};
  \draw[black!60] (100, 0.5) -- (100, 0.30000000000000004);
  \node[anchor=north, font=\scriptsize] at (100, 0.30000000000000004) {100\%};
  \node[anchor=north, font=\small] at (50, -0.3999999999999999) {Pass rate (95\% bootstrap CI)};
  \draw[gray!40, dashed] (81.7, 0.5) -- (81.7, 14.5);
  \node[anchor=south, font=\scriptsize, text=gray!70!black] at (81.7, 14.5) {full-stack ref};
  \fill[black!4] (0, 13.55) rectangle (100, 14.45);
  \fill[black!4] (0, 11.55) rectangle (100, 12.45);
  \fill[black!4] (0, 9.55) rectangle (100, 10.45);
  \fill[black!4] (0, 7.55) rectangle (100, 8.45);
  \fill[black!4] (0, 5.55) rectangle (100, 6.45);
  \fill[black!4] (0, 3.55) rectangle (100, 4.45);
  \fill[black!4] (0, 1.55) rectangle (100, 2.45);
  \draw[blue!60!black, line width=0.6pt] (71.7, 14) -- (91.7, 14);
  \draw[blue!60!black, line width=0.6pt] (71.7, 13.82) -- (71.7, 14.18);
  \draw[blue!60!black, line width=0.6pt] (91.7, 13.82) -- (91.7, 14.18);
  \fill[blue!60!black] (81.7, 14) circle (1.6pt);
  \node[anchor=west, font=\scriptsize] at (101.5, 14) {\texttt{claude}\hspace{0.4em}$N{=}60$\hspace{0.4em}81.7\%};
  \node[anchor=east, font=\small] at (-2, 14) {Adversarial suite};
  \draw[orange!70!black, line width=0.6pt] (100.0, 13) -- (100.0, 13);
  \draw[orange!70!black, line width=0.6pt] (100.0, 12.82) -- (100.0, 13.18);
  \draw[orange!70!black, line width=0.6pt] (100.0, 12.82) -- (100.0, 13.18);
  \fill[orange!70!black] (100.0, 13) circle (1.6pt);
  \node[anchor=west, font=\scriptsize] at (101.5, 13) {\texttt{i1\_on}\hspace{0.4em}$N{=}20$\hspace{0.4em}100.0\%};
  \node[anchor=east, font=\small] at (-2, 13) {Invariant ablation (I\textsubscript{1} confirm gate)};
  \draw[orange!70!black, line width=0.6pt] (20.0, 12) -- (60.0, 12);
  \draw[orange!70!black, line width=0.6pt] (20.0, 11.82) -- (20.0, 12.18);
  \draw[orange!70!black, line width=0.6pt] (60.0, 11.82) -- (60.0, 12.18);
  \fill[orange!70!black] (40.0, 12) circle (1.6pt);
  \node[anchor=west, font=\scriptsize] at (101.5, 12) {\texttt{i1\_off}\hspace{0.4em}$N{=}20$\hspace{0.4em}40.0\%};
  \draw[orange!70!black, line width=0.6pt] (73.3, 11) -- (91.7, 11);
  \draw[orange!70!black, line width=0.6pt] (73.3, 10.82) -- (73.3, 11.18);
  \draw[orange!70!black, line width=0.6pt] (91.7, 10.82) -- (91.7, 11.18);
  \fill[orange!70!black] (83.3, 11) circle (1.6pt);
  \node[anchor=west, font=\scriptsize] at (101.5, 11) {\texttt{claude}\hspace{0.4em}$N{=}60$\hspace{0.4em}83.3\%};
  \node[anchor=east, font=\small] at (-2, 11) {Invariant ablation (I\textsubscript{2} off)};
  \draw[orange!70!black, line width=0.6pt] (100.0, 10) -- (100.0, 10);
  \draw[orange!70!black, line width=0.6pt] (100.0, 9.82) -- (100.0, 10.18);
  \draw[orange!70!black, line width=0.6pt] (100.0, 9.82) -- (100.0, 10.18);
  \fill[orange!70!black] (100.0, 10) circle (1.6pt);
  \node[anchor=west, font=\scriptsize] at (101.5, 10) {\texttt{i3\_on}\hspace{0.4em}$N{=}20$\hspace{0.4em}100.0\%};
  \node[anchor=east, font=\small] at (-2, 10) {Invariant ablation (I\textsubscript{3} wallet-tier)};
  \draw[orange!70!black, line width=0.6pt] (10.0, 9) -- (45.0, 9);
  \draw[orange!70!black, line width=0.6pt] (10.0, 8.82) -- (10.0, 9.18);
  \draw[orange!70!black, line width=0.6pt] (45.0, 8.82) -- (45.0, 9.18);
  \fill[orange!70!black] (25.0, 9) circle (1.6pt);
  \node[anchor=west, font=\scriptsize] at (101.5, 9) {\texttt{i3\_off}\hspace{0.4em}$N{=}20$\hspace{0.4em}25.0\%};
  \draw[orange!70!black, line width=0.6pt] (73.3, 8) -- (91.7, 8);
  \draw[orange!70!black, line width=0.6pt] (73.3, 7.82) -- (73.3, 8.18);
  \draw[orange!70!black, line width=0.6pt] (91.7, 7.82) -- (91.7, 8.18);
  \fill[orange!70!black] (83.3, 8) circle (1.6pt);
  \node[anchor=west, font=\scriptsize] at (101.5, 8) {\texttt{claude}\hspace{0.4em}$N{=}60$\hspace{0.4em}83.3\%};
  \node[anchor=east, font=\small] at (-2, 8) {Invariant ablation (I\textsubscript{4} off)};
  \draw[orange!70!black, line width=0.6pt] (100.0, 7) -- (100.0, 7);
  \draw[orange!70!black, line width=0.6pt] (100.0, 6.82) -- (100.0, 7.18);
  \draw[orange!70!black, line width=0.6pt] (100.0, 6.82) -- (100.0, 7.18);
  \fill[orange!70!black] (100.0, 7) circle (1.6pt);
  \node[anchor=west, font=\scriptsize] at (101.5, 7) {\texttt{i5\_on}\hspace{0.4em}$N{=}20$\hspace{0.4em}100.0\%};
  \node[anchor=east, font=\small] at (-2, 7) {Invariant ablation (I\textsubscript{5} delegation-tier)};
  \draw[orange!70!black, line width=0.6pt] (10.0, 6) -- (45.0, 6);
  \draw[orange!70!black, line width=0.6pt] (10.0, 5.82) -- (10.0, 6.18);
  \draw[orange!70!black, line width=0.6pt] (45.0, 5.82) -- (45.0, 6.18);
  \fill[orange!70!black] (25.0, 6) circle (1.6pt);
  \node[anchor=west, font=\scriptsize] at (101.5, 6) {\texttt{i5\_off}\hspace{0.4em}$N{=}20$\hspace{0.4em}25.0\%};
  \draw[orange!70!black, line width=0.6pt] (100.0, 5) -- (100.0, 5);
  \draw[orange!70!black, line width=0.6pt] (100.0, 4.82) -- (100.0, 5.18);
  \draw[orange!70!black, line width=0.6pt] (100.0, 4.82) -- (100.0, 5.18);
  \fill[orange!70!black] (100.0, 5) circle (1.6pt);
  \node[anchor=west, font=\scriptsize] at (101.5, 5) {\texttt{i6\_on}\hspace{0.4em}$N{=}20$\hspace{0.4em}100.0\%};
  \node[anchor=east, font=\small] at (-2, 5) {Invariant ablation (I\textsubscript{6} cross-tier)};
  \draw[orange!70!black, line width=0.6pt] (0.0, 4) -- (0.0, 4);
  \draw[orange!70!black, line width=0.6pt] (0.0, 3.82) -- (0.0, 4.18);
  \draw[orange!70!black, line width=0.6pt] (0.0, 3.82) -- (0.0, 4.18);
  \fill[orange!70!black] (0.0, 4) circle (1.6pt);
  \node[anchor=west, font=\scriptsize] at (101.5, 4) {\texttt{i6\_off}\hspace{0.4em}$N{=}20$\hspace{0.4em}0.0\%};
  \draw[orange!70!black, line width=0.6pt] (100.0, 3) -- (100.0, 3);
  \draw[orange!70!black, line width=0.6pt] (100.0, 2.82) -- (100.0, 3.18);
  \draw[orange!70!black, line width=0.6pt] (100.0, 2.82) -- (100.0, 3.18);
  \fill[orange!70!black] (100.0, 3) circle (1.6pt);
  \node[anchor=west, font=\scriptsize] at (101.5, 3) {\texttt{i7\_on}\hspace{0.4em}$N{=}20$\hspace{0.4em}100.0\%};
  \node[anchor=east, font=\small] at (-2, 3) {Invariant ablation (I\textsubscript{7} cross-tier)};
  \draw[orange!70!black, line width=0.6pt] (0.0, 2) -- (0.0, 2);
  \draw[orange!70!black, line width=0.6pt] (0.0, 1.82) -- (0.0, 2.18);
  \draw[orange!70!black, line width=0.6pt] (0.0, 1.82) -- (0.0, 2.18);
  \fill[orange!70!black] (0.0, 2) circle (1.6pt);
  \node[anchor=west, font=\scriptsize] at (101.5, 2) {\texttt{i7\_off}\hspace{0.4em}$N{=}20$\hspace{0.4em}0.0\%};
  \draw[red!55!black, line width=0.6pt] (3.4, 1) -- (18.6, 1);
  \draw[red!55!black, line width=0.6pt] (3.4, 0.8200000000000001) -- (3.4, 1.18);
  \draw[red!55!black, line width=0.6pt] (18.6, 0.8200000000000001) -- (18.6, 1.18);
  \fill[red!55!black] (10.2, 1) circle (1.6pt);
  \node[anchor=west, font=\scriptsize] at (101.5, 1) {\texttt{claude}\hspace{0.4em}$N{=}59$\hspace{0.4em}10.2\%};
  \node[anchor=east, font=\small] at (-2, 1) {Baseline (naive ReAct)};
\end{tikzpicture}%
}
\caption{Experimental results with backing model Claude. Each row's whisker is a 95\% bootstrap confidence interval over 1000 resamples with replacement of the per-row Bernoulli outcomes; the filled point is the empirical pass rate. The dashed vertical reference line marks this model's full-stack adversarial pass rate so each ablation's drop reads off against a common ceiling. Code-gate ablation rows (I\textsubscript{1}, I\textsubscript{3}, I\textsubscript{5}, I\textsubscript{6}, I\textsubscript{7}) describe a gate at the executor or avatar tier rather than the backing model, and so appear identically in the companion figure for the other backing model.}
\label{fig:experiments-claude}
\end{figure}

%% file: content/figures/experiment_forest_minimax.tex
\begin{figure}[t]
\centering
\resizebox{\linewidth}{!}{%
\begin{tikzpicture}[
  font=\small\sffamily,
  x=1mm,
  y=4.6mm,
]
  \draw[black!60] (0, 0.5) -- (100, 0.5);
  \draw[black!60] (0, 0.5) -- (0, 0.30000000000000004);
  \node[anchor=north, font=\scriptsize] at (0, 0.30000000000000004) {0\%};
  \draw[black!60] (25, 0.5) -- (25, 0.30000000000000004);
  \node[anchor=north, font=\scriptsize] at (25, 0.30000000000000004) {25\%};
  \draw[black!60] (50, 0.5) -- (50, 0.30000000000000004);
  \node[anchor=north, font=\scriptsize] at (50, 0.30000000000000004) {50\%};
  \draw[black!60] (75, 0.5) -- (75, 0.30000000000000004);
  \node[anchor=north, font=\scriptsize] at (75, 0.30000000000000004) {75\%};
  \draw[black!60] (100, 0.5) -- (100, 0.30000000000000004);
  \node[anchor=north, font=\scriptsize] at (100, 0.30000000000000004) {100\%};
  \node[anchor=north, font=\small] at (50, -0.3999999999999999) {Pass rate (95\% bootstrap CI)};
  \draw[gray!40, dashed] (86.7, 0.5) -- (86.7, 14.5);
  \node[anchor=south, font=\scriptsize, text=gray!70!black] at (86.7, 14.5) {full-stack ref};
  \fill[black!4] (0, 13.55) rectangle (100, 14.45);
  \fill[black!4] (0, 11.55) rectangle (100, 12.45);
  \fill[black!4] (0, 9.55) rectangle (100, 10.45);
  \fill[black!4] (0, 7.55) rectangle (100, 8.45);
  \fill[black!4] (0, 5.55) rectangle (100, 6.45);
  \fill[black!4] (0, 3.55) rectangle (100, 4.45);
  \fill[black!4] (0, 1.55) rectangle (100, 2.45);
  \draw[blue!60!black, line width=0.6pt] (78.3, 14) -- (95.0, 14);
  \draw[blue!60!black, line width=0.6pt] (78.3, 13.82) -- (78.3, 14.18);
  \draw[blue!60!black, line width=0.6pt] (95.0, 13.82) -- (95.0, 14.18);
  \fill[blue!60!black] (86.7, 14) circle (1.6pt);
  \node[anchor=west, font=\scriptsize] at (101.5, 14) {\texttt{minimax}\hspace{0.4em}$N{=}60$\hspace{0.4em}86.7\%};
  \node[anchor=east, font=\small] at (-2, 14) {Adversarial suite};
  \draw[orange!70!black, line width=0.6pt] (100.0, 13) -- (100.0, 13);
  \draw[orange!70!black, line width=0.6pt] (100.0, 12.82) -- (100.0, 13.18);
  \draw[orange!70!black, line width=0.6pt] (100.0, 12.82) -- (100.0, 13.18);
  \fill[orange!70!black] (100.0, 13) circle (1.6pt);
  \node[anchor=west, font=\scriptsize] at (101.5, 13) {\texttt{i1\_on}\hspace{0.4em}$N{=}20$\hspace{0.4em}100.0\%};
  \node[anchor=east, font=\small] at (-2, 13) {Invariant ablation (I\textsubscript{1} confirm gate)};
  \draw[orange!70!black, line width=0.6pt] (20.0, 12) -- (60.0, 12);
  \draw[orange!70!black, line width=0.6pt] (20.0, 11.82) -- (20.0, 12.18);
  \draw[orange!70!black, line width=0.6pt] (60.0, 11.82) -- (60.0, 12.18);
  \fill[orange!70!black] (40.0, 12) circle (1.6pt);
  \node[anchor=west, font=\scriptsize] at (101.5, 12) {\texttt{i1\_off}\hspace{0.4em}$N{=}20$\hspace{0.4em}40.0\%};
  \draw[orange!70!black, line width=0.6pt] (78.3, 11) -- (95.0, 11);
  \draw[orange!70!black, line width=0.6pt] (78.3, 10.82) -- (78.3, 11.18);
  \draw[orange!70!black, line width=0.6pt] (95.0, 10.82) -- (95.0, 11.18);
  \fill[orange!70!black] (86.7, 11) circle (1.6pt);
  \node[anchor=west, font=\scriptsize] at (101.5, 11) {\texttt{minimax}\hspace{0.4em}$N{=}60$\hspace{0.4em}86.7\%};
  \node[anchor=east, font=\small] at (-2, 11) {Invariant ablation (I\textsubscript{2} off)};
  \draw[orange!70!black, line width=0.6pt] (100.0, 10) -- (100.0, 10);
  \draw[orange!70!black, line width=0.6pt] (100.0, 9.82) -- (100.0, 10.18);
  \draw[orange!70!black, line width=0.6pt] (100.0, 9.82) -- (100.0, 10.18);
  \fill[orange!70!black] (100.0, 10) circle (1.6pt);
  \node[anchor=west, font=\scriptsize] at (101.5, 10) {\texttt{i3\_on}\hspace{0.4em}$N{=}20$\hspace{0.4em}100.0\%};
  \node[anchor=east, font=\small] at (-2, 10) {Invariant ablation (I\textsubscript{3} wallet-tier)};
  \draw[orange!70!black, line width=0.6pt] (10.0, 9) -- (45.0, 9);
  \draw[orange!70!black, line width=0.6pt] (10.0, 8.82) -- (10.0, 9.18);
  \draw[orange!70!black, line width=0.6pt] (45.0, 8.82) -- (45.0, 9.18);
  \fill[orange!70!black] (25.0, 9) circle (1.6pt);
  \node[anchor=west, font=\scriptsize] at (101.5, 9) {\texttt{i3\_off}\hspace{0.4em}$N{=}20$\hspace{0.4em}25.0\%};
  \draw[orange!70!black, line width=0.6pt] (75.0, 8) -- (93.3, 8);
  \draw[orange!70!black, line width=0.6pt] (75.0, 7.82) -- (75.0, 8.18);
  \draw[orange!70!black, line width=0.6pt] (93.3, 7.82) -- (93.3, 8.18);
  \fill[orange!70!black] (85.0, 8) circle (1.6pt);
  \node[anchor=west, font=\scriptsize] at (101.5, 8) {\texttt{minimax}\hspace{0.4em}$N{=}60$\hspace{0.4em}85.0\%};
  \node[anchor=east, font=\small] at (-2, 8) {Invariant ablation (I\textsubscript{4} off)};
  \draw[orange!70!black, line width=0.6pt] (100.0, 7) -- (100.0, 7);
  \draw[orange!70!black, line width=0.6pt] (100.0, 6.82) -- (100.0, 7.18);
  \draw[orange!70!black, line width=0.6pt] (100.0, 6.82) -- (100.0, 7.18);
  \fill[orange!70!black] (100.0, 7) circle (1.6pt);
  \node[anchor=west, font=\scriptsize] at (101.5, 7) {\texttt{i5\_on}\hspace{0.4em}$N{=}20$\hspace{0.4em}100.0\%};
  \node[anchor=east, font=\small] at (-2, 7) {Invariant ablation (I\textsubscript{5} delegation-tier)};
  \draw[orange!70!black, line width=0.6pt] (10.0, 6) -- (45.0, 6);
  \draw[orange!70!black, line width=0.6pt] (10.0, 5.82) -- (10.0, 6.18);
  \draw[orange!70!black, line width=0.6pt] (45.0, 5.82) -- (45.0, 6.18);
  \fill[orange!70!black] (25.0, 6) circle (1.6pt);
  \node[anchor=west, font=\scriptsize] at (101.5, 6) {\texttt{i5\_off}\hspace{0.4em}$N{=}20$\hspace{0.4em}25.0\%};
  \draw[orange!70!black, line width=0.6pt] (100.0, 5) -- (100.0, 5);
  \draw[orange!70!black, line width=0.6pt] (100.0, 4.82) -- (100.0, 5.18);
  \draw[orange!70!black, line width=0.6pt] (100.0, 4.82) -- (100.0, 5.18);
  \fill[orange!70!black] (100.0, 5) circle (1.6pt);
  \node[anchor=west, font=\scriptsize] at (101.5, 5) {\texttt{i6\_on}\hspace{0.4em}$N{=}20$\hspace{0.4em}100.0\%};
  \node[anchor=east, font=\small] at (-2, 5) {Invariant ablation (I\textsubscript{6} cross-tier)};
  \draw[orange!70!black, line width=0.6pt] (0.0, 4) -- (0.0, 4);
  \draw[orange!70!black, line width=0.6pt] (0.0, 3.82) -- (0.0, 4.18);
  \draw[orange!70!black, line width=0.6pt] (0.0, 3.82) -- (0.0, 4.18);
  \fill[orange!70!black] (0.0, 4) circle (1.6pt);
  \node[anchor=west, font=\scriptsize] at (101.5, 4) {\texttt{i6\_off}\hspace{0.4em}$N{=}20$\hspace{0.4em}0.0\%};
  \draw[orange!70!black, line width=0.6pt] (100.0, 3) -- (100.0, 3);
  \draw[orange!70!black, line width=0.6pt] (100.0, 2.82) -- (100.0, 3.18);
  \draw[orange!70!black, line width=0.6pt] (100.0, 2.82) -- (100.0, 3.18);
  \fill[orange!70!black] (100.0, 3) circle (1.6pt);
  \node[anchor=west, font=\scriptsize] at (101.5, 3) {\texttt{i7\_on}\hspace{0.4em}$N{=}20$\hspace{0.4em}100.0\%};
  \node[anchor=east, font=\small] at (-2, 3) {Invariant ablation (I\textsubscript{7} cross-tier)};
  \draw[orange!70!black, line width=0.6pt] (0.0, 2) -- (0.0, 2);
  \draw[orange!70!black, line width=0.6pt] (0.0, 1.82) -- (0.0, 2.18);
  \draw[orange!70!black, line width=0.6pt] (0.0, 1.82) -- (0.0, 2.18);
  \fill[orange!70!black] (0.0, 2) circle (1.6pt);
  \node[anchor=west, font=\scriptsize] at (101.5, 2) {\texttt{i7\_off}\hspace{0.4em}$N{=}20$\hspace{0.4em}0.0\%};
  \draw[red!55!black, line width=0.6pt] (3.3, 1) -- (18.3, 1);
  \draw[red!55!black, line width=0.6pt] (3.3, 0.8200000000000001) -- (3.3, 1.18);
  \draw[red!55!black, line width=0.6pt] (18.3, 0.8200000000000001) -- (18.3, 1.18);
  \fill[red!55!black] (10.0, 1) circle (1.6pt);
  \node[anchor=west, font=\scriptsize] at (101.5, 1) {\texttt{minimax}\hspace{0.4em}$N{=}60$\hspace{0.4em}10.0\%};
  \node[anchor=east, font=\small] at (-2, 1) {Baseline (naive ReAct)};
\end{tikzpicture}%
}
\caption{Experimental results with backing model MiniMax-M2.7-highspeed. Each row's whisker is a 95\% bootstrap confidence interval over 1000 resamples with replacement of the per-row Bernoulli outcomes; the filled point is the empirical pass rate. The dashed vertical reference line marks this model's full-stack adversarial pass rate so each ablation's drop reads off against a common ceiling. Code-gate ablation rows (I\textsubscript{1}, I\textsubscript{3}, I\textsubscript{5}, I\textsubscript{6}, I\textsubscript{7}) describe a gate at the executor or avatar tier rather than the backing model, and so appear identically in the companion figure for the other backing model.}
\label{fig:experiments-minimax}
\end{figure}

%% file: content/figures/experiment_forest_gemini.tex
\begin{figure}[t]
\centering
\resizebox{\linewidth}{!}{%
\begin{tikzpicture}[
  font=\small\sffamily,
  x=1mm,
  y=4.6mm,
]
  \draw[black!60] (0, 0.5) -- (100, 0.5);
  \draw[black!60] (0, 0.5) -- (0, 0.30000000000000004);
  \node[anchor=north, font=\scriptsize] at (0, 0.30000000000000004) {0\%};
  \draw[black!60] (25, 0.5) -- (25, 0.30000000000000004);
  \node[anchor=north, font=\scriptsize] at (25, 0.30000000000000004) {25\%};
  \draw[black!60] (50, 0.5) -- (50, 0.30000000000000004);
  \node[anchor=north, font=\scriptsize] at (50, 0.30000000000000004) {50\%};
  \draw[black!60] (75, 0.5) -- (75, 0.30000000000000004);
  \node[anchor=north, font=\scriptsize] at (75, 0.30000000000000004) {75\%};
  \draw[black!60] (100, 0.5) -- (100, 0.30000000000000004);
  \node[anchor=north, font=\scriptsize] at (100, 0.30000000000000004) {100\%};
  \node[anchor=north, font=\small] at (50, -0.3999999999999999) {Pass rate (95\% bootstrap CI)};
  \draw[gray!40, dashed] (60.0, 0.5) -- (60.0, 14.5);
  \node[anchor=south, font=\scriptsize, text=gray!70!black] at (60.0, 14.5) {full-stack ref};
  \fill[black!4] (0, 13.55) rectangle (100, 14.45);
  \fill[black!4] (0, 11.55) rectangle (100, 12.45);
  \fill[black!4] (0, 9.55) rectangle (100, 10.45);
  \fill[black!4] (0, 7.55) rectangle (100, 8.45);
  \fill[black!4] (0, 5.55) rectangle (100, 6.45);
  \fill[black!4] (0, 3.55) rectangle (100, 4.45);
  \fill[black!4] (0, 1.55) rectangle (100, 2.45);
  \draw[blue!60!black, line width=0.6pt] (48.3, 14) -- (71.7, 14);
  \draw[blue!60!black, line width=0.6pt] (48.3, 13.82) -- (48.3, 14.18);
  \draw[blue!60!black, line width=0.6pt] (71.7, 13.82) -- (71.7, 14.18);
  \fill[blue!60!black] (60.0, 14) circle (1.6pt);
  \node[anchor=west, font=\scriptsize] at (101.5, 14) {\texttt{gemini}\hspace{0.4em}$N{=}60$\hspace{0.4em}60.0\%};
  \node[anchor=east, font=\small] at (-2, 14) {Adversarial suite};
  \draw[orange!70!black, line width=0.6pt] (100.0, 13) -- (100.0, 13);
  \draw[orange!70!black, line width=0.6pt] (100.0, 12.82) -- (100.0, 13.18);
  \draw[orange!70!black, line width=0.6pt] (100.0, 12.82) -- (100.0, 13.18);
  \fill[orange!70!black] (100.0, 13) circle (1.6pt);
  \node[anchor=west, font=\scriptsize] at (101.5, 13) {\texttt{i1\_on}\hspace{0.4em}$N{=}20$\hspace{0.4em}100.0\%};
  \node[anchor=east, font=\small] at (-2, 13) {Invariant ablation (I\textsubscript{1} confirm gate)};
  \draw[orange!70!black, line width=0.6pt] (20.0, 12) -- (60.0, 12);
  \draw[orange!70!black, line width=0.6pt] (20.0, 11.82) -- (20.0, 12.18);
  \draw[orange!70!black, line width=0.6pt] (60.0, 11.82) -- (60.0, 12.18);
  \fill[orange!70!black] (40.0, 12) circle (1.6pt);
  \node[anchor=west, font=\scriptsize] at (101.5, 12) {\texttt{i1\_off}\hspace{0.4em}$N{=}20$\hspace{0.4em}40.0\%};
  \draw[orange!70!black, line width=0.6pt] (50.0, 11) -- (73.3, 11);
  \draw[orange!70!black, line width=0.6pt] (50.0, 10.82) -- (50.0, 11.18);
  \draw[orange!70!black, line width=0.6pt] (73.3, 10.82) -- (73.3, 11.18);
  \fill[orange!70!black] (61.7, 11) circle (1.6pt);
  \node[anchor=west, font=\scriptsize] at (101.5, 11) {\texttt{gemini}\hspace{0.4em}$N{=}60$\hspace{0.4em}61.7\%};
  \node[anchor=east, font=\small] at (-2, 11) {Invariant ablation (I\textsubscript{2} off)};
  \draw[orange!70!black, line width=0.6pt] (100.0, 10) -- (100.0, 10);
  \draw[orange!70!black, line width=0.6pt] (100.0, 9.82) -- (100.0, 10.18);
  \draw[orange!70!black, line width=0.6pt] (100.0, 9.82) -- (100.0, 10.18);
  \fill[orange!70!black] (100.0, 10) circle (1.6pt);
  \node[anchor=west, font=\scriptsize] at (101.5, 10) {\texttt{i3\_on}\hspace{0.4em}$N{=}20$\hspace{0.4em}100.0\%};
  \node[anchor=east, font=\small] at (-2, 10) {Invariant ablation (I\textsubscript{3} wallet-tier)};
  \draw[orange!70!black, line width=0.6pt] (10.0, 9) -- (45.0, 9);
  \draw[orange!70!black, line width=0.6pt] (10.0, 8.82) -- (10.0, 9.18);
  \draw[orange!70!black, line width=0.6pt] (45.0, 8.82) -- (45.0, 9.18);
  \fill[orange!70!black] (25.0, 9) circle (1.6pt);
  \node[anchor=west, font=\scriptsize] at (101.5, 9) {\texttt{i3\_off}\hspace{0.4em}$N{=}20$\hspace{0.4em}25.0\%};
  \draw[orange!70!black, line width=0.6pt] (45.0, 8) -- (68.3, 8);
  \draw[orange!70!black, line width=0.6pt] (45.0, 7.82) -- (45.0, 8.18);
  \draw[orange!70!black, line width=0.6pt] (68.3, 7.82) -- (68.3, 8.18);
  \fill[orange!70!black] (56.7, 8) circle (1.6pt);
  \node[anchor=west, font=\scriptsize] at (101.5, 8) {\texttt{gemini}\hspace{0.4em}$N{=}60$\hspace{0.4em}56.7\%};
  \node[anchor=east, font=\small] at (-2, 8) {Invariant ablation (I\textsubscript{4} off)};
  \draw[orange!70!black, line width=0.6pt] (100.0, 7) -- (100.0, 7);
  \draw[orange!70!black, line width=0.6pt] (100.0, 6.82) -- (100.0, 7.18);
  \draw[orange!70!black, line width=0.6pt] (100.0, 6.82) -- (100.0, 7.18);
  \fill[orange!70!black] (100.0, 7) circle (1.6pt);
  \node[anchor=west, font=\scriptsize] at (101.5, 7) {\texttt{i5\_on}\hspace{0.4em}$N{=}20$\hspace{0.4em}100.0\%};
  \node[anchor=east, font=\small] at (-2, 7) {Invariant ablation (I\textsubscript{5} delegation-tier)};
  \draw[orange!70!black, line width=0.6pt] (10.0, 6) -- (45.0, 6);
  \draw[orange!70!black, line width=0.6pt] (10.0, 5.82) -- (10.0, 6.18);
  \draw[orange!70!black, line width=0.6pt] (45.0, 5.82) -- (45.0, 6.18);
  \fill[orange!70!black] (25.0, 6) circle (1.6pt);
  \node[anchor=west, font=\scriptsize] at (101.5, 6) {\texttt{i5\_off}\hspace{0.4em}$N{=}20$\hspace{0.4em}25.0\%};
  \draw[orange!70!black, line width=0.6pt] (100.0, 5) -- (100.0, 5);
  \draw[orange!70!black, line width=0.6pt] (100.0, 4.82) -- (100.0, 5.18);
  \draw[orange!70!black, line width=0.6pt] (100.0, 4.82) -- (100.0, 5.18);
  \fill[orange!70!black] (100.0, 5) circle (1.6pt);
  \node[anchor=west, font=\scriptsize] at (101.5, 5) {\texttt{i6\_on}\hspace{0.4em}$N{=}20$\hspace{0.4em}100.0\%};
  \node[anchor=east, font=\small] at (-2, 5) {Invariant ablation (I\textsubscript{6} cross-tier)};
  \draw[orange!70!black, line width=0.6pt] (0.0, 4) -- (0.0, 4);
  \draw[orange!70!black, line width=0.6pt] (0.0, 3.82) -- (0.0, 4.18);
  \draw[orange!70!black, line width=0.6pt] (0.0, 3.82) -- (0.0, 4.18);
  \fill[orange!70!black] (0.0, 4) circle (1.6pt);
  \node[anchor=west, font=\scriptsize] at (101.5, 4) {\texttt{i6\_off}\hspace{0.4em}$N{=}20$\hspace{0.4em}0.0\%};
  \draw[orange!70!black, line width=0.6pt] (100.0, 3) -- (100.0, 3);
  \draw[orange!70!black, line width=0.6pt] (100.0, 2.82) -- (100.0, 3.18);
  \draw[orange!70!black, line width=0.6pt] (100.0, 2.82) -- (100.0, 3.18);
  \fill[orange!70!black] (100.0, 3) circle (1.6pt);
  \node[anchor=west, font=\scriptsize] at (101.5, 3) {\texttt{i7\_on}\hspace{0.4em}$N{=}20$\hspace{0.4em}100.0\%};
  \node[anchor=east, font=\small] at (-2, 3) {Invariant ablation (I\textsubscript{7} cross-tier)};
  \draw[orange!70!black, line width=0.6pt] (0.0, 2) -- (0.0, 2);
  \draw[orange!70!black, line width=0.6pt] (0.0, 1.82) -- (0.0, 2.18);
  \draw[orange!70!black, line width=0.6pt] (0.0, 1.82) -- (0.0, 2.18);
  \fill[orange!70!black] (0.0, 2) circle (1.6pt);
  \node[anchor=west, font=\scriptsize] at (101.5, 2) {\texttt{i7\_off}\hspace{0.4em}$N{=}20$\hspace{0.4em}0.0\%};
  \draw[red!55!black, line width=0.6pt] (45.0, 1) -- (71.7, 1);
  \draw[red!55!black, line width=0.6pt] (45.0, 0.8200000000000001) -- (45.0, 1.18);
  \draw[red!55!black, line width=0.6pt] (71.7, 0.8200000000000001) -- (71.7, 1.18);
  \fill[red!55!black] (56.7, 1) circle (1.6pt);
  \node[anchor=west, font=\scriptsize] at (101.5, 1) {\texttt{gemini}\hspace{0.4em}$N{=}60$\hspace{0.4em}56.7\%};
  \node[anchor=east, font=\small] at (-2, 1) {Baseline (naive ReAct)};
\end{tikzpicture}%
}
\caption{Experimental results with backing model Gemini-2.5-flash-lite. Each row's whisker is a 95\% bootstrap confidence interval over 1000 resamples with replacement of the per-row Bernoulli outcomes; the filled point is the empirical pass rate. The dashed vertical reference line marks this model's full-stack adversarial pass rate so each ablation's drop reads off against a common ceiling. Code-gate ablation rows (I\textsubscript{1}, I\textsubscript{3}, I\textsubscript{5}, I\textsubscript{6}, I\textsubscript{7}) describe a gate at the executor or avatar tier rather than the backing model, and so appear identically in the companion figure for the other backing model.}
\label{fig:experiments-gemini}
\end{figure}

%% file: content/figures/experiment_forest_openai_nano.tex
\begin{figure}[t]
\centering
\resizebox{\linewidth}{!}{%
\begin{tikzpicture}[
  font=\small\sffamily,
  x=1mm,
  y=4.6mm,
]
  \draw[black!60] (0, 0.5) -- (100, 0.5);
  \draw[black!60] (0, 0.5) -- (0, 0.30000000000000004);
  \node[anchor=north, font=\scriptsize] at (0, 0.30000000000000004) {0\%};
  \draw[black!60] (25, 0.5) -- (25, 0.30000000000000004);
  \node[anchor=north, font=\scriptsize] at (25, 0.30000000000000004) {25\%};
  \draw[black!60] (50, 0.5) -- (50, 0.30000000000000004);
  \node[anchor=north, font=\scriptsize] at (50, 0.30000000000000004) {50\%};
  \draw[black!60] (75, 0.5) -- (75, 0.30000000000000004);
  \node[anchor=north, font=\scriptsize] at (75, 0.30000000000000004) {75\%};
  \draw[black!60] (100, 0.5) -- (100, 0.30000000000000004);
  \node[anchor=north, font=\scriptsize] at (100, 0.30000000000000004) {100\%};
  \node[anchor=north, font=\small] at (50, -0.3999999999999999) {Pass rate (95\% bootstrap CI)};
  \draw[gray!40, dashed] (36.8, 0.5) -- (36.8, 14.5);
  \node[anchor=south, font=\scriptsize, text=gray!70!black] at (36.8, 14.5) {full-stack ref};
  \fill[black!4] (0, 13.55) rectangle (100, 14.45);
  \fill[black!4] (0, 11.55) rectangle (100, 12.45);
  \fill[black!4] (0, 9.55) rectangle (100, 10.45);
  \fill[black!4] (0, 7.55) rectangle (100, 8.45);
  \fill[black!4] (0, 5.55) rectangle (100, 6.45);
  \fill[black!4] (0, 3.55) rectangle (100, 4.45);
  \fill[black!4] (0, 1.55) rectangle (100, 2.45);
  \draw[blue!60!black, line width=0.6pt] (15.8, 14) -- (57.9, 14);
  \draw[blue!60!black, line width=0.6pt] (15.8, 13.82) -- (15.8, 14.18);
  \draw[blue!60!black, line width=0.6pt] (57.9, 13.82) -- (57.9, 14.18);
  \fill[blue!60!black] (36.8, 14) circle (1.6pt);
  \node[anchor=west, font=\scriptsize] at (101.5, 14) {\texttt{openai-nano}\hspace{0.4em}$N{=}19$\hspace{0.4em}36.8\%};
  \node[anchor=east, font=\small] at (-2, 14) {Adversarial suite};
  \draw[orange!70!black, line width=0.6pt] (100.0, 13) -- (100.0, 13);
  \draw[orange!70!black, line width=0.6pt] (100.0, 12.82) -- (100.0, 13.18);
  \draw[orange!70!black, line width=0.6pt] (100.0, 12.82) -- (100.0, 13.18);
  \fill[orange!70!black] (100.0, 13) circle (1.6pt);
  \node[anchor=west, font=\scriptsize] at (101.5, 13) {\texttt{i1\_on}\hspace{0.4em}$N{=}20$\hspace{0.4em}100.0\%};
  \node[anchor=east, font=\small] at (-2, 13) {Invariant ablation (I\textsubscript{1} confirm gate)};
  \draw[orange!70!black, line width=0.6pt] (20.0, 12) -- (60.0, 12);
  \draw[orange!70!black, line width=0.6pt] (20.0, 11.82) -- (20.0, 12.18);
  \draw[orange!70!black, line width=0.6pt] (60.0, 11.82) -- (60.0, 12.18);
  \fill[orange!70!black] (40.0, 12) circle (1.6pt);
  \node[anchor=west, font=\scriptsize] at (101.5, 12) {\texttt{i1\_off}\hspace{0.4em}$N{=}20$\hspace{0.4em}40.0\%};
  \draw[orange!70!black, line width=0.6pt] (13.6, 11) -- (50.0, 11);
  \draw[orange!70!black, line width=0.6pt] (13.6, 10.82) -- (13.6, 11.18);
  \draw[orange!70!black, line width=0.6pt] (50.0, 10.82) -- (50.0, 11.18);
  \fill[orange!70!black] (31.8, 11) circle (1.6pt);
  \node[anchor=west, font=\scriptsize] at (101.5, 11) {\texttt{openai-nano}\hspace{0.4em}$N{=}22$\hspace{0.4em}31.8\%};
  \node[anchor=east, font=\small] at (-2, 11) {Invariant ablation (I\textsubscript{2} off)};
  \draw[orange!70!black, line width=0.6pt] (100.0, 10) -- (100.0, 10);
  \draw[orange!70!black, line width=0.6pt] (100.0, 9.82) -- (100.0, 10.18);
  \draw[orange!70!black, line width=0.6pt] (100.0, 9.82) -- (100.0, 10.18);
  \fill[orange!70!black] (100.0, 10) circle (1.6pt);
  \node[anchor=west, font=\scriptsize] at (101.5, 10) {\texttt{i3\_on}\hspace{0.4em}$N{=}20$\hspace{0.4em}100.0\%};
  \node[anchor=east, font=\small] at (-2, 10) {Invariant ablation (I\textsubscript{3} wallet-tier)};
  \draw[orange!70!black, line width=0.6pt] (10.0, 9) -- (45.0, 9);
  \draw[orange!70!black, line width=0.6pt] (10.0, 8.82) -- (10.0, 9.18);
  \draw[orange!70!black, line width=0.6pt] (45.0, 8.82) -- (45.0, 9.18);
  \fill[orange!70!black] (25.0, 9) circle (1.6pt);
  \node[anchor=west, font=\scriptsize] at (101.5, 9) {\texttt{i3\_off}\hspace{0.4em}$N{=}20$\hspace{0.4em}25.0\%};
  \draw[orange!70!black, line width=0.6pt] (20.0, 8) -- (60.0, 8);
  \draw[orange!70!black, line width=0.6pt] (20.0, 7.82) -- (20.0, 8.18);
  \draw[orange!70!black, line width=0.6pt] (60.0, 7.82) -- (60.0, 8.18);
  \fill[orange!70!black] (40.0, 8) circle (1.6pt);
  \node[anchor=west, font=\scriptsize] at (101.5, 8) {\texttt{openai-nano}\hspace{0.4em}$N{=}20$\hspace{0.4em}40.0\%};
  \node[anchor=east, font=\small] at (-2, 8) {Invariant ablation (I\textsubscript{4} off)};
  \draw[orange!70!black, line width=0.6pt] (100.0, 7) -- (100.0, 7);
  \draw[orange!70!black, line width=0.6pt] (100.0, 6.82) -- (100.0, 7.18);
  \draw[orange!70!black, line width=0.6pt] (100.0, 6.82) -- (100.0, 7.18);
  \fill[orange!70!black] (100.0, 7) circle (1.6pt);
  \node[anchor=west, font=\scriptsize] at (101.5, 7) {\texttt{i5\_on}\hspace{0.4em}$N{=}20$\hspace{0.4em}100.0\%};
  \node[anchor=east, font=\small] at (-2, 7) {Invariant ablation (I\textsubscript{5} delegation-tier)};
  \draw[orange!70!black, line width=0.6pt] (10.0, 6) -- (45.0, 6);
  \draw[orange!70!black, line width=0.6pt] (10.0, 5.82) -- (10.0, 6.18);
  \draw[orange!70!black, line width=0.6pt] (45.0, 5.82) -- (45.0, 6.18);
  \fill[orange!70!black] (25.0, 6) circle (1.6pt);
  \node[anchor=west, font=\scriptsize] at (101.5, 6) {\texttt{i5\_off}\hspace{0.4em}$N{=}20$\hspace{0.4em}25.0\%};
  \draw[orange!70!black, line width=0.6pt] (100.0, 5) -- (100.0, 5);
  \draw[orange!70!black, line width=0.6pt] (100.0, 4.82) -- (100.0, 5.18);
  \draw[orange!70!black, line width=0.6pt] (100.0, 4.82) -- (100.0, 5.18);
  \fill[orange!70!black] (100.0, 5) circle (1.6pt);
  \node[anchor=west, font=\scriptsize] at (101.5, 5) {\texttt{i6\_on}\hspace{0.4em}$N{=}20$\hspace{0.4em}100.0\%};
  \node[anchor=east, font=\small] at (-2, 5) {Invariant ablation (I\textsubscript{6} cross-tier)};
  \draw[orange!70!black, line width=0.6pt] (0.0, 4) -- (0.0, 4);
  \draw[orange!70!black, line width=0.6pt] (0.0, 3.82) -- (0.0, 4.18);
  \draw[orange!70!black, line width=0.6pt] (0.0, 3.82) -- (0.0, 4.18);
  \fill[orange!70!black] (0.0, 4) circle (1.6pt);
  \node[anchor=west, font=\scriptsize] at (101.5, 4) {\texttt{i6\_off}\hspace{0.4em}$N{=}20$\hspace{0.4em}0.0\%};
  \draw[orange!70!black, line width=0.6pt] (100.0, 3) -- (100.0, 3);
  \draw[orange!70!black, line width=0.6pt] (100.0, 2.82) -- (100.0, 3.18);
  \draw[orange!70!black, line width=0.6pt] (100.0, 2.82) -- (100.0, 3.18);
  \fill[orange!70!black] (100.0, 3) circle (1.6pt);
  \node[anchor=west, font=\scriptsize] at (101.5, 3) {\texttt{i7\_on}\hspace{0.4em}$N{=}20$\hspace{0.4em}100.0\%};
  \node[anchor=east, font=\small] at (-2, 3) {Invariant ablation (I\textsubscript{7} cross-tier)};
  \draw[orange!70!black, line width=0.6pt] (0.0, 2) -- (0.0, 2);
  \draw[orange!70!black, line width=0.6pt] (0.0, 1.82) -- (0.0, 2.18);
  \draw[orange!70!black, line width=0.6pt] (0.0, 1.82) -- (0.0, 2.18);
  \fill[orange!70!black] (0.0, 2) circle (1.6pt);
  \node[anchor=west, font=\scriptsize] at (101.5, 2) {\texttt{i7\_off}\hspace{0.4em}$N{=}20$\hspace{0.4em}0.0\%};
  \draw[red!55!black, line width=0.6pt] (20.0, 1) -- (43.3, 1);
  \draw[red!55!black, line width=0.6pt] (20.0, 0.8200000000000001) -- (20.0, 1.18);
  \draw[red!55!black, line width=0.6pt] (43.3, 0.8200000000000001) -- (43.3, 1.18);
  \fill[red!55!black] (31.7, 1) circle (1.6pt);
  \node[anchor=west, font=\scriptsize] at (101.5, 1) {\texttt{openai-nano}\hspace{0.4em}$N{=}60$\hspace{0.4em}31.7\%};
  \node[anchor=east, font=\small] at (-2, 1) {Baseline (naive ReAct)};
\end{tikzpicture}%
}
\caption{Experimental results with backing model GPT-5-nano. Each row's whisker is a 95\% bootstrap confidence interval over 1000 resamples with replacement of the per-row Bernoulli outcomes; the filled point is the empirical pass rate. The dashed vertical reference line marks this model's full-stack adversarial pass rate so each ablation's drop reads off against a common ceiling. Code-gate ablation rows (I\textsubscript{1}, I\textsubscript{3}, I\textsubscript{5}, I\textsubscript{6}, I\textsubscript{7}) describe a gate at the executor or avatar tier rather than the backing model, and so appear identically in the companion figure for the other backing model.}
\label{fig:experiments-openai-nano}
\end{figure}

%% file: content/tables/primitive_breakdown.tex
\begin{table}[t]
\centering
\caption{On-chain write operations by transition primitive (\S3). Each row is a transaction that landed on a public ledger or was accepted by the Hyperliquid exchange API. 108 total operations.}
\label{tab:primitives}
\begin{tabular}{lr}
\toprule
Primitive & Count \\
\midrule
Transfer & 12 \\
Swap & 16 \\
Bridge & 27 \\
Deposit & 1 \\
Withdraw & 3 \\
Trade & 35 \\
Intra-Exchange & 5 \\
Contract & 9 \\
\midrule
\textbf{Total} & \textbf{108} \\
\bottomrule
\end{tabular}
\end{table}

%% file: content/tables/chain_coverage.tex
\begin{table}[t]
\centering
\caption{On-chain coverage: operations per source chain. Bridge operations are counted by their source chain, the side that the agent signed on.}
\label{tab:chains}
\begin{tabular}{lr}
\toprule
Chain & Operations \\
\midrule
ethereum & 35 \\
base & 29 \\
polygon & 19 \\
hyperliquid & 16 \\
solana & 4 \\
bitcoin & 2 \\
bsc & 2 \\
arbitrum & 1 \\
\midrule
\textbf{Total} & \textbf{108} \\
\bottomrule
\end{tabular}
\end{table}

%% file: content/tables/failure_classes.tex
\begin{table}[t]
\centering
\caption{Safety bug incidents observed in the wild, grouped by the eight failure classes of \S\ref{sec:orchestration}. Every class has at least one audited incident with a corresponding fix shipped to production; the case-study paragraphs in this section cite the specific incidents where the detail matters.}
\label{tab:failure-classes}
\footnotesize
\begin{tabular}{@{}cp{0.55\columnwidth}c@{}}
\toprule
Class & Description & Fix shipped \\
\midrule
1 & Phantom success & \cmark \\
2 & Phantom failure & \cmark \\
3 & Preview gate bypass & \cmark \\
4 & Wrong-wallet selection & \cmark \\
5 & Provider gap (THORChain no-USDC-on-BSC) & \cmark \\
6 & Slow indexer drift & \cmark \\
7 & Stale agent context & \cmark \\
8 & Documented-atomicity drift & \cmark \\
9 & Duplicate webhook delivery (TradingView 502 retry) & \cmark \\
\bottomrule
\end{tabular}
\end{table}

%% file: content/tables/avatar_fleet.tex
\begin{table}[t]
\centering
\caption{Production avatar-fleet outcome distribution. One row per installed template; columns are the four terminal-state outcomes recorded in the avatar service's \texttt{avatar\_runs} table over the fleet's lifetime to date. Avatar runs are non-adversarial by construction (the avatar is software the user installed and parameterized, not an attacker), so the fraction of runs that the executor blocks is a direct lower bound on Invariants \ref{inv:postobs}--\ref{inv:plan-retry}'s false-positive rate under realistic load.}
\label{tab:avatar-fleet}
\small
\begin{tabular}{lrrrrr}
\toprule
Template & Success & Escalated & Error & Aborted & Total \\
\midrule
\textsf{dca} & 28 & 28 & 21 & 10 & 87 \\
\textsf{market\_analysis} & 2 & 0 & 0 & 0 & 2 \\
\midrule
\textbf{All} & 30 & 28 & 21 & 10 & \textbf{89} \\
\bottomrule
\end{tabular}
\end{table}

%% file: content/06_related_work.tex
\section{Related Work}
\label{sec:related}

\paragraph{Tool-use in agent frameworks and long-context drift.}
ReAct~\cite{yao2023react} interleaves reasoning and tool
invocation; Reflexion~\cite{shinn2023reflexion} adds verbal
self-correction; the Model Context Protocol (MCP)~\cite{mcp2024}
standardizes tool wire formats. These frameworks optimize
\emph{task success} on benchmarks and give little attention to
irreversible side-effects; a recent
systematization~\cite{mcp_sok} catalogs MCP-layer risks (tool
poisoning, prompt injection, epistemic failures) but does not
address the coercion-at-the-type-boundary failure we observe in
class~3. Long-context behavior is documented as ``lost in the
middle''~\cite{liu2023lost}; subsequent work
sharpens the phenomenon at the
start~\cite{lost_beginning_reasoning} and under intent
mismatch~\cite{intent_mismatch_lost,laban2025llms}. Our
\emph{stale-context} failure (class 7) is a specific instance:
the stale conclusion is whether a write is possible at all,
addressed structurally (smart wallet-active picker + skill-level
fresh-query rules on repeat write intents) rather than by
prompting alone. The hallucination
literature~\cite{ji2023survey,huang2023survey} is adjacent but
distinct: our failure is not fabrication, it is over-weighting a
previously-correct observation.

\paragraph{Atomicity and end-to-end precondition checking.}
Classical transaction semantics~\cite{gray1993transaction,
herlihy1990linearizability} give precise definitions financial
systems are expected to uphold. In practice venue APIs routinely
expose documented semantics that diverge from implementation
(class 8 in our taxonomy), broadly consistent with the atomicity
pitfalls documented in DeFi~\cite{daian2020flash,zhou2023sok}.
Defending against documented-vs-implemented drift reduces to
precondition checking on the client side rather than trust in
the documented guarantee---an application of the end-to-end
argument~\cite{saltzer1984end} to order-replacement APIs.

\paragraph{Multi-agent systems, capabilities, and account
abstraction.}
Multi-agent frameworks (AutoGen~\cite{wu2023autogen},
MetaGPT~\cite{hong2023metagpt}) formalize multi-LLM
collaboration via structured conversation; our orchestration
(\S\ref{sec:multi-tier}) layers over a single-actor surface
unchanged by the multi-agent setting---a human user and an
avatar are interchangeable from the executor's point of view,
which the multi-agent literature does not address. Invariant
\ref{inv:delegation} borrows vocabulary from object-capability
security~\cite{levy1984capability,miller2006robust,
miller2003capabilities}; we instantiate the surface (scoped
bearer tokens, audit ground truth) but do not prove the formal
object-capability properties, flagged as future work
(\S\ref{sec:discussion}). The natural alternative substrate is
ERC-4337 account abstraction~\cite{erc4337}, where smart-contract
wallets expose session keys (Safe~\cite{safe_aa_modules},
Argent~\cite{argent_session_keys}) that programmatically restrict
a delegated signer with on-chain revocation transparency. For an
EVM-only deployment account abstraction would be a strictly
cleaner home for Invariant~\ref{inv:delegation}; we chose
off-chain tokens because the wallet drives non-EVM executors
(Hyperliquid signed L1 messages, an HTTP-only analyst service)
under a single auth middleware. MPC-custody and multi-sig
alternatives attack key custody rather than scoping signer
authority and are orthogonal to the orchestration-layer
invariants.

\paragraph{Web3 agents, bridges, and concurrent-state formalisms.}
A growing set of projects ships agent-driven Web3 tooling and
recent work~\cite{autonomous_agents_blockchain,decoagent} has begun
to benchmark the space. We are not aware of published work that
formalizes wallet operations as state transitions in a
four-dimensional space, enumerates safety invariants against real
observed failure modes, \emph{and} evaluates against an on-chain
track record rather than a synthetic benchmark. Bridge surveys
\cite{sok_bridges,price_of_interoperability} taxonomize the
provider-gap failure (class 5) and the trust assumptions users pay
to move value between ledgers; our table-driven dispatcher is an
applied instance of the routing machinery these works motivate,
specialized for a probabilistic agent. Petri nets and process
calculi~\cite{petri1962kommunikation,milner1999communicating,
puhlmann2009pi} give vocabulary for concurrent state transitions
with preservation constraints; embedding the 4D space into a
Petri-net or $\pi$-calculus formalization would let plan-level
guarantees (Invariant~\ref{inv:plan-retry}) follow structurally
rather than from a named runtime invariant, and remains future
work. HD-wallet derivation
trees~\cite{post_quantum_bitcoin} are only lightly formalized in
the academic literature, with recent post-quantum work treating
the tree as a capability-like structure under hot- and cold-
storage threat models.

%% file: content/07_discussion.tex
\section{Discussion}
\label{sec:discussion}

\paragraph{Limitations.}
The work treats the user as trusted and single-tenant; multi-user
deployment introduces adversarial-prompt surface the invariants do
not address. The model provider's API is a black box, so a silent
change to tokenization, system-prompt handling, or sampling
defaults could invalidate the reproducibility regime
(\S\ref{sec:methodology}) without an observable signal on the
wallet side. The controlled-experiment sample ($N=60$ per cell) is
small for an academic claim: the 95\% bootstrap intervals reported
are unpaired (conservative for our own with-vs-without comparisons
where a paired test e.g.\ McNemar would be more powerful) and we
do not report a per-seed variance study. Within that suite the
gpt-5-nano column is further methodology-bounded ($N=19$--$22$
because of per-turn agent-loop wall-clock; see
\S\ref{sec:evaluation}). The code-gate invariant ablations
(I\textsubscript{3}, I\textsubscript{5}, I\textsubscript{6},
I\textsubscript{7}) are exercised by in-process drivers that
mirror the production code paths rather than by toggling the gate
inside the broadcasting executor over a live trace; this isolates
the gate's logic from the broadcast side effect but does not
exercise the gate against a population of traffic-mixed inputs,
which a trace-replay row over the production audit log would.
Invariant~\ref{inv:preflight} additionally inherits a trust
assumption on the L1 RPC provider; light-client verification
(header sync + inclusion proof) would close the surface and is
the most-pressing infrastructure-level future work.

\paragraph{Out-of-scope environmental losses.}
Three losses the seven invariants do not directly defend against
sit in the threat-model bucket \emph{partly in scope} from
\S\ref{sec:background}:

\begin{description}\setlength\itemsep{2pt}
  \item[Indirect prompt injection] through avatar-ingested feeds
    (RSS, social, third-party trigger payloads):
    Invariant~\ref{inv:delegation}'s scope bounding limits the
    \emph{authority} an injected prompt can exercise but not its
    exercise on in-scope tools.
  \item[Avatar-service compromise] is a single-point-of-compromise
    distinct from executor compromise: the avatar holds
    long-lived delegation tokens across the fleet, and a breach
    would let an attacker exercise every token's scope at once.
    Mitigation today is operational (scope-bound tokens, audit
    log per call, revocation without session restart); on-chain
    session-key custody via account abstraction
    (\S\ref{sec:related}) is the natural next step.
  \item[MEV] (sandwich, JIT-liquidity, oracle / bridge
    front-running) is a counterparty-layer
    extraction~\cite{daian2020flash,qin2022quantifying_mev}
    operating on otherwise-valid signed transactions, orthogonal
    to the agent invariants. Operational mitigations
    (limit-order discipline on Trade primitives, router
    preference for on-chain auction mechanics on Bridge
    primitives, tight slippage tolerance) reduce realized loss
    on small notionals but do not close the surface.
\end{description}

\paragraph{Future work --- formal.}
The most load-bearing direction is plan-level transactionality.
Invariant~\ref{inv:plan-retry} states a verify-then-retry rule
enforced at the avatar tier, but it guards retries only
\emph{negatively}: it refuses an unsafe retry without formalizing
when forward progress on a partial plan is admissible. The
cleaner treatment is to model a plan as a compound transition
$\tau_n \circ \cdots \circ \tau_1$ in $\mathcal{S}$ with an
explicit transactionality predicate, so plan-level guarantees
fall out of the formalism rather than from a named runtime
invariant; a swap-then-bridge or deposit-then-withdraw sequence
has atomicity requirements Invariant~\ref{inv:plan-retry} does
not capture. A secondary direction is extending $\tau$ to a
partial function $\tau_{(\textsf{actor}, \Sigma)}$ parameterized
by the caller's scope set, lifting
Invariant~\ref{inv:delegation} from runtime check to algebra
(building on the object-capability vocabulary discussed in
\S\ref{sec:related}).

\paragraph{Future work --- empirical.}
The avatar fleet (\S\ref{sec:multi-tier}) is the largest source
of non-adversarial agent-issued traffic the wallet sees. Once
\texttt{audit\_log.caller\_type} is populated end-to-end, every
daily run is a labeled data point and the false-positive rate of
each invariant becomes a continuous deployment metric---an
unexpected lift in, say, Invariant~\ref{inv:preflight}'s
false-positive rate signals indexer drift or an avatar template's
intent grammar moving out of distribution, a qualitatively
different empirical posture from the scripted adversarial suite.
A trace-replay version of the I\textsubscript{3},
I\textsubscript{5}, I\textsubscript{6}, I\textsubscript{7}
ablation rows over the same audit log would replace the
in-process drivers with traffic-mixed inputs and is the cleanest
single piece of empirical follow-up.

%% file: content/08_conclusion.tex
\section{Conclusion}
\label{sec:conclusion}

We argued that a broad class of agent-orchestrated actions on
public ledgers are naturally state transitions in a
four-dimensional space, and that the point of saying so is not
description but decidability: it makes ``what was promised'' and
``what was executed'' the same kind of finite object. On that
basis we proved \emph{execution fidelity}---a session's realized
effect is either nothing or exactly the rendered transition,
exactly once---and derived seven safety invariants from the
condition rather than from the bugs that prompted them. The
guarantee stops short of the question everyone wants answered,
whether the rendered transition is what the user meant; what it
does is make that the only remaining question, over an object
small enough to check. The empirical side is an on-chain track
record of 108 production write operations plus a controlled
$N{=}60$ adversarial suite on which the full stack lifts pass
rate by $\sim$74 percentage points over a naive-ReAct baseline on
two write-aggressive backing models---and by $\sim$3 points on a
write-cautious one, which is a caution about how safety layers
are evaluated at least as much as a result about this one. The cross-tier
invariant is load-bearing in deployment: a separate avatar
service holds scoped delegation tokens against subsets of the
wallet's authority and drives the same user-facing surface
humans use, without any privileged backdoor. The cleanest single
piece of formal future work the paper suggests is lifting
Invariant~\ref{inv:plan-retry} from a runtime check into a
plan-level transactionality predicate on the compound transition
$\tau_n \circ \cdots \circ \tau_1$ (\S\ref{sec:discussion}). The
formalism and invariants apply to any probabilistic agent acting
on irreversible external state; public ledgers are a
particularly demanding testbed because every failure is publicly
and permanently observable.

%% file: references.bib
@inproceedings{yao2023react,
  author    = {Shunyu Yao and Jeffrey Zhao and Dian Yu and Nan Du and Izhak Shafran and Karthik Narasimhan and Yuan Cao},
  title     = {ReAct: Synergizing Reasoning and Acting in Language Models},
  booktitle = {International Conference on Learning Representations (ICLR)},
  year      = {2023},
  url       = {https://arxiv.org/abs/2210.03629}
}

@inproceedings{shinn2023reflexion,
  author    = {Noah Shinn and Federico Cassano and Edward Berman and Ashwin Gopinath and Karthik Narasimhan and Shunyu Yao},
  title     = {Reflexion: Language Agents with Verbal Reinforcement Learning},
  booktitle = {Advances in Neural Information Processing Systems (NeurIPS)},
  year      = {2023},
  url       = {https://arxiv.org/abs/2303.11366}
}

@misc{mcp2024,
  author       = {{Anthropic}},
  title        = {Model Context Protocol Specification},
  year         = {2024},
  howpublished = {\url{https://modelcontextprotocol.io/specification}},
  note         = {Accessed 2026-04-19}
}

@inproceedings{liu2023lost,
  author    = {Nelson F. Liu and Kevin Lin and John Hewitt and Ashwin Paranjape and Michele Bevilacqua and Fabio Petroni and Percy Liang},
  title     = {Lost in the Middle: How Language Models Use Long Contexts},
  booktitle = {Transactions of the Association for Computational Linguistics (TACL)},
  year      = {2024},
  url       = {https://arxiv.org/abs/2307.03172}
}

@misc{laban2025llms,
  author       = {Philippe Laban and Hiroaki Hayashi and Yingbo Zhou and Jennifer Neville},
  title        = {LLMs Get Lost In Multi-Turn Conversation},
  year         = {2025},
  howpublished = {arXiv:2505.06120},
  url          = {https://arxiv.org/abs/2505.06120}
}

@article{ji2023survey,
  author  = {Ziwei Ji and Nayeon Lee and Rita Frieske and Tiezheng Yu and Dan Su and Yan Xu and Etsuko Ishii and Ye Jin Bang and Andrea Madotto and Pascale Fung},
  title   = {Survey of Hallucination in Natural Language Generation},
  journal = {ACM Computing Surveys},
  volume  = {55},
  number  = {12},
  year    = {2023},
  doi     = {10.1145/3571730}
}

@misc{huang2023survey,
  author       = {Lei Huang and Weijiang Yu and Weitao Ma and Weihong Zhong and Zhangyin Feng and Haotian Chen and Qianglong Chen and Weihua Peng and Xiaocheng Feng and Bing Qin and Ting Liu},
  title        = {A Survey on Hallucination in Large Language Models: Principles, Taxonomy, Challenges, and Open Questions},
  year         = {2023},
  howpublished = {arXiv:2311.05232},
  url          = {https://arxiv.org/abs/2311.05232}
}

@book{gray1993transaction,
  author    = {Jim Gray and Andreas Reuter},
  title     = {Transaction Processing: Concepts and Techniques},
  publisher = {Morgan Kaufmann},
  year      = {1993}
}

@article{herlihy1990linearizability,
  author  = {Maurice P. Herlihy and Jeannette M. Wing},
  title   = {Linearizability: A Correctness Condition for Concurrent Objects},
  journal = {ACM Transactions on Programming Languages and Systems},
  volume  = {12},
  number  = {3},
  year    = {1990},
  doi     = {10.1145/78969.78972}
}

@inproceedings{daian2020flash,
  author    = {Philip Daian and Steven Goldfeder and Tyler Kell and Yunqi Li and Xueyuan Zhao and Iddo Bentov and Lorenz Breidenbach and Ari Juels},
  title     = {Flash Boys 2.0: Frontrunning in Decentralized Exchanges, Miner Extractable Value, and Consensus Instability},
  booktitle = {IEEE Symposium on Security and Privacy (S\&P)},
  year      = {2020},
  doi       = {10.1109/SP40000.2020.00040}
}

@inproceedings{zhou2023sok,
  author    = {Liyi Zhou and Xihan Xiong and Jens Ernstberger and Stefanos Chaliasos and Zhipeng Wang and Ye Wang and Kaihua Qin and Roger Wattenhofer and Dawn Song and Arthur Gervais},
  title     = {{SoK}: Decentralized Finance (DeFi) Attacks},
  booktitle = {IEEE Symposium on Security and Privacy (S\&P)},
  year      = {2023},
  url       = {https://arxiv.org/abs/2208.13035}
}

@article{saltzer1984end,
  author  = {J. H. Saltzer and D. P. Reed and D. D. Clark},
  title   = {End-to-End Arguments in System Design},
  journal = {ACM Transactions on Computer Systems},
  volume  = {2},
  number  = {4},
  year    = {1984},
  doi     = {10.1145/357401.357402}
}

@phdthesis{petri1962kommunikation,
  author = {Carl Adam Petri},
  title  = {Kommunikation mit Automaten},
  school = {Technische Universit{\"a}t Darmstadt},
  year   = {1962}
}

@book{milner1999communicating,
  author    = {Robin Milner},
  title     = {Communicating and Mobile Systems: The {$\pi$}-calculus},
  publisher = {Cambridge University Press},
  year      = {1999}
}

@misc{sok_bridges,
  title        = {{SoK}: {Security} of Cross-Chain Bridges: {Attack} Surfaces, Defenses, and Open Problems},
  author       = {Mengya Zhang and Xiaokuan Zhang and Josh Barbee and Yinqian Zhang and Zhiqiang Lin},
  year         = {2023},
  howpublished = {arXiv:2312.12573},
  url          = {https://arxiv.org/abs/2312.12573}
}

@misc{autonomous_agents_blockchain,
  title        = {Autonomous Agents on Blockchains: {Standards}, Execution Models, and Trust Boundaries},
  author       = {Saad Alqithami},
  year         = {2026},
  howpublished = {arXiv:2601.04583},
  url          = {https://arxiv.org/abs/2601.04583}
}

@incollection{puhlmann2009pi,
  title        = {A Look Around the Corner: {The} $\pi$-Calculus},
  author       = {Frank Puhlmann and Mathias Weske},
  year         = {2009},
  booktitle    = {Transactions on {Petri} Nets and Other Models of Concurrency {II}},
  publisher    = {Springer},
  doi          = {10.1007/978-3-642-00899-3_4}
}

@misc{price_of_interoperability,
  title        = {The Price of Interoperability: {Exploring} Cross-Chain Bridges and Their Economic Consequences},
  author       = {Yiyue Cao and Mingzhe Zheng and Lin William Cong and Siguang Li and Xuechao Wang},
  year         = {2026},
  howpublished = {arXiv:2604.03083},
  url          = {https://arxiv.org/abs/2604.03083}
}

@inproceedings{lost_beginning_reasoning,
  title        = {Lost at the Beginning of Reasoning},
  author       = {Baohao Liao and Xinyi Chen and Sara Rajaee and Yuhui Xu and Christian Herold and Anders S{\o}gaard and Maarten de Rijke and Christof Monz},
  year         = {2025},
  howpublished = {arXiv:2506.22058},
  url          = {https://arxiv.org/abs/2506.22058}
}

@misc{intent_mismatch_lost,
  title        = {Intent Mismatch Causes {LLMs} to Get Lost in Multi-Turn Conversation},
  author       = {Geng Liu and Fei Zhu and Rong Feng and Changyi Ma and Shiqi Wang and Gaofeng Meng},
  year         = {2026},
  howpublished = {arXiv:2602.07338},
  url          = {https://arxiv.org/abs/2602.07338}
}

@article{post_quantum_bitcoin,
  title        = {Towards Post-Quantum {Bitcoin} Blockchain using {Dilithium} Signature},
  author       = {Iyane Seck and Adeline Roux-Langlois},
  year         = {2025},
  journal      = {{IACR} Communications in Cryptology ({CIC})},
  volume       = {2},
  number       = {3},
  doi          = {10.62056/ak5wom2hd},
  url          = {https://cic.iacr.org/p/2/3/3}
}

@article{decoagent,
  title        = {{DeCoAgent}: {Large} Language Model Empowered Decentralized Autonomous Collaboration Agents Based on Smart Contracts},
  author       = {Anan Jin and Yuhang Ye and Brian Lee and Yuansong Qiao},
  year         = {2024},
  journal      = {{IEEE} Access},
  volume       = {12},
  pages        = {155234--155245},
  doi          = {10.1109/ACCESS.2024.3481641}
}

@misc{mcp_sok,
  title        = {Systematization of Knowledge: {Security} and Safety in the {Model Context Protocol} Ecosystem},
  author       = {Shiva Gaire and Srijan Gyawali and Saroj Mishra and Suman Niroula and Dilip Thakur and Umesh Yadav},
  year         = {2025},
  howpublished = {arXiv:2512.08290},
  url          = {https://arxiv.org/abs/2512.08290}
}

@misc{wu2023autogen,
  title        = {{AutoGen}: Enabling Next-Gen {LLM} Applications via Multi-Agent Conversation},
  author       = {Qingyun Wu and Gagan Bansal and Jieyu Zhang and Yiran Wu and Beibin Li and Erkang Zhu and Li Jiang and Xiaoyun Zhang and Shaokun Zhang and Jiale Liu and Ahmed Hassan Awadallah and Ryen W. White and Doug Burger and Chi Wang},
  year         = {2023},
  howpublished = {arXiv:2308.08155},
  url          = {https://arxiv.org/abs/2308.08155}
}

@misc{hong2023metagpt,
  title        = {{MetaGPT}: Meta Programming for A Multi-Agent Collaborative Framework},
  author       = {Sirui Hong and Mingchen Zhuge and Jiaqi Chen and Xiawu Zheng and Yuheng Cheng and Ceyao Zhang and Jinlin Wang and Zili Wang and Steven Ka Shing Yau and Zijuan Lin and Liyang Zhou and Chenyu Ran and Lingfeng Xiao and Chenglin Wu and J{\"u}rgen Schmidhuber},
  year         = {2023},
  howpublished = {arXiv:2308.00352},
  url          = {https://arxiv.org/abs/2308.00352}
}

@book{levy1984capability,
  author       = {Henry M. Levy},
  title        = {Capability-Based Computer Systems},
  publisher    = {Digital Press},
  year         = {1984},
  url          = {https://homes.cs.washington.edu/~levy/capabook/}
}

@phdthesis{miller2006robust,
  author       = {Mark S. Miller},
  title        = {Robust Composition: {Towards} a Unified Approach to Access Control and Concurrency Control},
  school       = {Johns Hopkins University},
  year         = {2006},
  url          = {http://www.erights.org/talks/thesis/markm-thesis.pdf}
}

@techreport{miller2003capabilities,
  author       = {Mark S. Miller and Ka-Ping Yee and Jonathan Shapiro},
  title        = {Capability Myths Demolished},
  institution  = {Systems Research Laboratory, Johns Hopkins University},
  number       = {SRL2003-02},
  year         = {2003},
  url          = {https://srl.cs.jhu.edu/pubs/SRL2003-02.pdf}
}

@misc{erc4337,
  author       = {Vitalik Buterin and Yoav Weiss and Dror Tirosh and Shahaf Nacson and Alex Forshtat and Kristof Gazso and Tjaden Hess},
  title        = {{ERC-4337}: Account Abstraction Using Alt Mempool},
  howpublished = {Ethereum Improvement Proposal},
  year         = {2021},
  url          = {https://eips.ethereum.org/EIPS/eip-4337}
}

@misc{safe_aa_modules,
  author       = {{Safe Ecosystem Foundation}},
  title        = {{Safe} Smart Account Modules and Session Keys},
  howpublished = {Technical documentation},
  year         = {2024},
  url          = {https://docs.safe.global/advanced/smart-account-modules}
}

@misc{argent_session_keys,
  author       = {{Argent Labs}},
  title        = {Session Keys in {Argent}: scope-bounded delegated signers},
  howpublished = {Technical blog},
  year         = {2023},
  url          = {https://www.argent.xyz/blog/session-keys/}
}

@inproceedings{qin2022quantifying_mev,
  author       = {Kaihua Qin and Liyi Zhou and Arthur Gervais},
  title        = {Quantifying Blockchain Extractable Value: How Dark is the Forest?},
  booktitle    = {{IEEE} Symposium on Security and Privacy ({S}\&{P})},
  year         = {2022},
  pages        = {198--214},
  doi          = {10.1109/SP46214.2022.9833734}
}
